\documentclass[conference, compsocconf]{IEEEtran}
\usepackage{amssymb}
\usepackage{amsmath}
\usepackage{graphicx}
\usepackage{gastex}
\usepackage{amsthm}
\usepackage{thm-restate}
\usepackage{booktabs}
\usepackage{multirow}
\usepackage[table]{xcolor}
\usepackage{colortbl}
\usepackage{algorithm}
\usepackage[noend]{algpseudocode}
\usepackage{fancyvrb}
\usepackage{multirow}
\usepackage{wrapfig}
\usepackage{tikz}
\usepackage{url}
\usetikzlibrary{calc,trees,arrows,automata,shapes,snakes, positioning}
\usepackage{pgfplots}
\usepgfplotslibrary{patchplots}
\pgfplotsset{compat=1.3}
\usepackage{array}
\usepackage{diagbox}
\usepackage{amsthm}
\usepackage{mathptmx}

\usetikzlibrary{calc,trees,shapes,intersections,arrows,automata,patterns,plotmarks}
\tikzset{astate/.style={draw,circle,inner sep=0pt,minimum size=8mm}}
\tikzset{astateq/.style={draw,rectangle,inner sep=3pt,rounded corners}}
\tikzset{adistr/.style={draw,circle,fill,minimum size=1mm,inner sep=0mm}}

\colorlet{colortime}{red}
\colorlet{colormem}{blue}

\pgfplotsset{
	/pgfplots/ybar legend/.style={
		/pgfplots/legend image code/.code={%
		\draw[##1,/tikz/.cd,bar width=3pt,yshift=-0.3em,bar shift=0pt,draw=none]
			plot coordinates {(0mm,0.8em)};},
	},
	every axis y label/.style={at={(ticklabel cs:0.5)}, rotate=90,anchor=near ticklabel},
	every axis x label/.style={at={(ticklabel cs:0.5)}, anchor=near ticklabel},
	every axis legend/.append style={draw=none,column sep=4mm},
	every axis/.append style={clip=false}
}
\newtheorem{definition}{Definition}
\newtheorem{theorem}{Theorem}
\newtheorem{corollary}{Corollary}
\newtheorem{example}{Example}
\newtheorem{lemma}{Lemma}

\renewcommand{\phi}{\varphi}
\newcommand{\len}[1]{|#1|}
\newcommand{\states}{S}
\newcommand{\concat}{\cdot}

\newcommand{\tranRel}{\text {\sf T}}
\newcommand{\trace}{\mathit{trace}}
\newcommand{\Kripke}{\mathcal{K}}
\newcommand{\aKripke}{(\states,\initialState,\tranRel,\AP,\labels)}
\newcommand{\labels}{L}
\newcommand{\AP}{\mathit{AP}}
\newcommand{\prefix}[2]{#1|^{#2}}
\newcommand{\suffix}[2]{#1|_{#2}}

\newcommand{\LTL}{\text{LTL}}
\newcommand{\UXLTL}{\text {\sf LTL(U, X)}}
\newcommand{\U}{\text{\sf U}}
\newcommand{\X}{\text{\sf X}}

\newcommand{\W}{\text{\sf W}}

\newcommand{\initialState}{\bar{s}}
\newcommand{\paths}{\pi}
\newcommand{\pathSet}{\mathit{Paths}}
\newcommand{\pathAt}[2]{#1[#2]}

\newcommand{\F}{\text{\sf F}}
\newcommand{\G}{\text{\sf G}}
\renewcommand{\bot}{\text{\bf 0}}
\renewcommand{\top}{\text{\bf 1}}
\newcommand{\FG}{\text{\sf LTL(F, G)}}

\newcommand{\size}[1]{|#1|}
\renewcommand{\flat}{\mathit{fnf}}
\newcolumntype{M}[1]{>{\centering\arraybackslash}m{\dimexpr#1\linewidth-2\tabcolsep}}

\def\implies{\to}

\hyphenation{op-tical net-works semi-conduc-tor}

\begin{document}

\title{Verify LTL with Fairness Assumptions Efficiently}

\author{
\IEEEauthorblockN{Yong Li\IEEEauthorrefmark{1}\IEEEauthorrefmark{2}, Lei Song\IEEEauthorrefmark{1}, Yuan Feng\IEEEauthorrefmark{3}, Lijun Zhang\IEEEauthorrefmark{1}\IEEEauthorrefmark{2}}
\IEEEauthorblockA{\IEEEauthorrefmark{1}State Key Laboratory of Computer Science, Institute of Software,
 CAS, China}
\IEEEauthorblockA{\IEEEauthorrefmark{2}University of Chinese Academy of Sciences, China}
\IEEEauthorblockA{\IEEEauthorrefmark{3}Centre for Quantum Computation and Intelligent Systems,\\
University of Technology Sydney, Australia}
}

\maketitle

\begin{abstract}
This paper deals with model checking problems with respect to LTL
properties under fairness assumptions. We first present an
efficient algorithm to deal with a fragment of fairness
assumptions and then extend the algorithm to handle arbitrary
ones. Notably, by making use of some syntactic
transformations, our algorithm avoids constructing
corresponding B\"uchi automata for the whole fairness assumptions,
which can be very large in practice. We implement our algorithm in
NuSMV and consider a large selection of formulas. Our experiments show
that in many cases our approach exceeds the automata-theoretic
approach up to several orders of magnitude, in both time and memory.
\end{abstract}


\IEEEpeerreviewmaketitle

\section{Introduction}
\label{sec:intro}
\emph{Linear Temporal Logic} (LTL)~\cite{Pnueli1977TLP} has been shown to be a proper
specification language. As a result, for verifying reactive systems, model checkers
for $\LTL$, like Spin~\cite{Holzmann1997MCS} and
NuSMV~\cite{DBLP:journals/sttt/CimattiCGR00}, have been applied in
practice successfully. To verify
whether or not a system satisfies an $\LTL$ formula, the
classical automata-theoretic approach~\cite{DBLP:conf/lics/VardiW86} is
usually adopted: Firstly, a B\"uchi automaton is built which accepts
all executions violating the $\LTL$ formula; Secondly, a product
system is built from the original system and the B\"uchi automaton;
Finally, the problem is reduced to finding an accepting path in the
product system. Since in the worst case the constructed B\"uchi automaton
can be exponentially larger than the $\LTL$ formula, both time and
space complexity of the algorithm in~\cite{DBLP:conf/lics/VardiW86} is exponential with
respect to the size of the $\LTL$
formula. The complexity of this algorithm is shown to
be PSPACE-complete
in~\cite{Sistla:1985:CPL:3828.3837}. Even if we restrict to a small subset of $\LTL$
formulas (those only containing eventual modality $\F$), it is still
NP-complete. On the other side, due to the popularity of $\LTL$, many ideas have
been proposed optimizing the construction of B\"uchi automata, see
e.g.~\cite{DBLP:conf/cav/DanieleGV99, DBLP:conf/concur/EtessamiH00,
Gastin2001, DBLP:conf/cav/SomenziB00, Kini2015tacas, Sickert2016cav}.

The classification of properties into different categories is pivotal
for efficient verification of reactive systems. In the seminal paper~\cite{DBLP:dblp_journals/tse/Lamport77},
Lamport introduced the notions of safety and liveness properties,
where ``safety'' properties assert something ``bad'' never happens,
while ``liveness'' properties require something ``good'' will
eventually happen. These notions were later formalized by Alpern and
Schneider in~\cite{alpern1987recognizing}.
Properties were further classified into
strong safety and absolute liveness
in~\cite{DBLP:journals/fac/Sistla94}, and fair properties. The notion of fairness
is important for verifying liveness
in reactive systems to remove unrealistic
behaviors~\cite{Francez:1986:FAI:19247}.

In practice, fairness assumptions can have a great impact on the performance
in many cases. For instance in the
binary semaphore protocol~\cite{HammerMAM05}, the fairness assumption that whenever a process is ready,
it will
have a chance to enter the critical section, is given by: $\bigwedge_{1\le
  i\le n}(\G\F\mathit{enter}_i\implies\G\F\mathit{critical}_i)$ ($\G$
and $\F$ denote ``always'' and ``eventually'', respectively) with
$n$ being the number of processes. When $n=5$, the corresponding B\"uchi automaton
generated by LTL3BA~\cite{Gastin2001} has more than 300
states and 1 million transitions\footnote{Interested readers can try
  the online $\LTL$ translator available at: \url{http://spot.lip6.fr/ltl2tgba.html}}.
Therefore, model checking formulas under such an assumption
will be time and memory consuming even when given formulas are simple.

In this paper we propose a novel algorithm to verify fairness as well as general
properties with fairness assumptions.
We do not only consider simple fairness formulas as mentioned above, but also
consider more complex fairness with nested modalities like $\F\G(a\lor\F b)$. Moreover,
 we extend the notion of fairness assumptions to full $\LTL$ formulas, which allows us
 to specify some fairness assumption like ``$a$ and $ \X(b\U c)$ holds infinitely often''.
Notably, our algorithm relies on
a syntactic transformation and avoids constructing a B\"uchi
automaton for the whole fairness. The approach is presented in three steps:
\begin{itemize}
\item We first restrict to fairness with only $\F$ and $\G$
  modalities, for which our syntactic transformation can completely avoid
  B\"uchi automata construction. For this setting our approach
  achieves a speedup of \emph{four orders} of magnitudes on some examples.
\item We then extend the algorithm to deal with fair formulas of full $\LTL$. The idea is to
  transform a fair formula into an equivalent one in disjunctive norm
  form, each sub-formula of which can be handled by specific and
  efficient algorithms. Even though we may still resort to the
  automata-theoretic approach for some sub-formulas, they are often
  much smaller than the original one.
\item Finally, we show our approach can be adapted to accelerate the
  verification of generic $\LTL$ formulas under fairness assumptions.
\end{itemize}

We have implemented the algorithm in NuSMV and compared it with the
classical algorithm. Our experimental results show that for many cases
while NuSMV runs out of time or memory quickly, our algorithm
terminates within seconds using memory less than 100 MB. The
main reason is that after the syntactical transformation, we can avoid
constructing B\"uchi automata for many sub-formulas. Even for those
where B\"uchi automata construction is inevitable, their corresponding
automata are relatively small and can be constructed efficiently.

It should be pointed out, however, that the syntactical transformation may
also cause exponential blow-ups in the length of given formulas.
Hence, as the experimental results show, our algorithm may be much slower
than NuSMV in some cases.
We then further discuss and characterize the formulas
for which our approach provides better performance.

\paragraph{Related Work}\label{sec:related}
There is a plenty of work on optimizing verification of $\LTL$ (or its
sub-logic). Here we only briefly recall a few closely
related works. In~\cite{BloemRS99}, specialized algorithms are proposed to deal with
$\LTL$ properties, which can be represented by either \emph{terminal} or \emph{weak} automata.
Compared to general algorithms, specialized algorithms improve the worst-case
time complexity by a constant factor.
This result is further formalized in~\cite{CernaP03}, where it is shown that
terminal and weak automata correspond to \emph{guarantee} properties
(something happens eventually) and \emph{persistence}  properties (something
always happens eventually), respectively. For guarantee properties, model checking algorithm
reduces to the reachability of an accepting state, while for persistence properties,
it reduces to finding a fully accepting cycle, i.e., all states on it are accepting.
Furthermore, a decision algorithm is proposed in~\cite{CernaP03} to check
whether an $\LTL$ formula is a guarantee or persistence property.
For properties which are neither guaranteed nor persistent, the general algorithm has to be
used. One exception is~\cite{RenaultDKP13}, where a decomposition algorithm is proposed
for strong automata, which are neither terminal nor weak.
The idea is to decompose a strong automaton into three sub-automata,
which are terminal, weak, and strong, respectively.
Then specialized algorithms can be used to check the terminal and weak sub-automata.
Since the strong sub-automaton is smaller than the original automaton,
decomposition always speeds up the verification according to the experiment in~\cite{RenaultDKP13}.

Differently,
our algorithm performs decomposition \emph{syntactically} on given
  formulas, hence we do not need to build their corresponding B\"uchi
  automata at the beginning. While the specialized algorithm
  in~\cite{BloemRS99,CernaP03,RenaultDKP13} is automata-based, B\"uchi
  automata have to be built beforehand, which may take a significant
  amount of time and memory, especially when the given formula is
  long~\cite{HammerMAM05}. Moreover, our algorithm works for arbitrary
  fairness including those which are neither guaranteed nor
  persistent.

\paragraph{Organization of the paper.}
\label{sec:orignization-paper}
Section~\ref{sec:pre} introduces some definitions and notations used
throughout the paper. The algorithm is presented in Section~\ref{sec:algo}.
We demonstrate the efficiency
of our algorithm via experiment in
Section~\ref{sec:experiment}. Finally, we conclude our paper
in Section~\ref{sec:concl-future-work}.

All missing proofs can be found in the appendix.

\section{Preliminaries}
\label{sec:pre}

We shall first introduce some preliminary definitions and notations and then
present the syntax and semantics of $\LTL$.

Let $X$ be a finite set of elements and $\xi=x_0x_1\ldots\in X^*$
with $X^*=\cup_{i\ge 0}X^i$ a finite sequence of elements in $X$. For
each $\xi\in X^i$, we let $\len{\xi}=i+1$ denote its length.
An infinite sequence $\xi\in X^\omega$ is \emph{cyclic} if there
exists $\xi'\in X^i$ for some $i$ such that
$\xi=(\xi')^\omega$, i.e., repeating $\xi'$ for infinite times.
Let $\pathAt{\xi}{i}=x_i$ denote the $(i+1)$-th element on $\xi$ if it exists.
We shall write $\prefix{\xi}{i}$ to denote the prefix of $\xi$
ending at the $(i+1)$-th element, while $\suffix{\xi}{i}$ the suffix of $\xi$
starting from the $(i+1)$-th element. Let $\xi\in X^*$ and $\xi'\in X^\omega$. Then
$\xi\concat\xi'$ denotes the infinite sequence obtained by
attaching $\xi'$ to the end of $\xi$.

We will fix a finite set of atomic propositions, denoted $\AP$ and
ranged over by $a,b,c,\ldots$, throughout the remainder of the
paper. The syntax of $\LTL$ is given by the following grammar:
$$
\phi,\psi ::= a \mid \neg a \mid \phi_1\land\phi_2 \mid \phi_1\lor\phi_2 \mid \X\phi \mid \phi_1\U\phi_2 \mid \phi_1\W\phi_2
$$
where $a\in\AP$ and $\phi,\psi,\phi_1,$ and $\phi_2$ range over $\LTL$ formulas.
As usual, we introduce some abbreviations: $\top=a\lor\neg a$ and
$\bot=a\land\neg a$ denote $\mathit{True}$ and
$\mathit{False}$ respectively,
while $\F\phi = \top\U\phi$ (eventually $\phi$), $\G\phi=\phi\W\bot$
(always $\phi$), and
$(\phi_1\implies\phi_2)=(\neg\phi_1\lor\phi_2)$. Let $l,l_1,l_2,\ldots$
range over \textit{propositional formulas}, i.e.,
formulas defined by: $l::= a\mid\neg a\mid l_1\land l_2 \mid l_1\lor l_2.$

Given an infinite sequence of sets of atomic propositions
$\rho=A_0A_1\ldots\in (2^\AP)^\omega$ and an $\LTL$ formula $\phi$, we say
$\rho$ satisfies $\phi$, written as $\rho\models\phi$, if:
$$
\begin{array}{rcl}
  \rho\models a &\text{ iff }&a\in\pathAt{\rho}{0}\\
  \rho\models \X\phi &\text{ iff }&\suffix{\rho}{1}\models\phi\\
  \rho\models \phi_1\U\phi_2 &\text{ iff }&\exists i\ge 0.(\suffix{\rho}{i}\models\phi_2\land\forall 0\le j<i.\suffix{\rho}{j}\models\phi_1)\\
  \rho\models \phi_1\W\phi_2 &\text{ iff }&(\forall i\ge 0.\suffix{\rho}{i}\models\phi_1)
\lor(\rho\models\phi_1\U\phi_2)
\end{array}
$$
All other connectives are defined in a standard way.
For formulas $\phi$ and $\psi$, we say that $\phi$ and $\psi$ are semantically
equivalent, denoted $\phi\equiv \psi$, if $\rho\models\phi$ iff
$\rho\models\psi$ for any $\rho\in(2^\AP)^\omega$.

Here we only define $\LTL$ formulas in \textit{positive normal form}, in the sense that
the negation operator can only be applied to atomic propositions. However, it is well-known that
any $\LTL$ formula can be transformed into an equivalent one
in positive normal form, using $\neg(\X\psi) \equiv \X(\neg\psi)$ and the following \emph{duality laws}:
$$
\begin{array}{ccc}
  \neg(\phi_1\U\phi_2) &\equiv&
  (\phi_1\land\neg\phi_2)\W(\neg\phi_1\land\neg\phi_2)\\
  \neg(\phi_1\W\phi_2) &\equiv& (\phi_1\land\neg\phi_2)\U(\neg\phi_1\land\neg\phi_2)
\end{array}
$$

Fairness assumptions are critical to rule out unrealistic behaviors
when performing verification; see for
instance~\cite{DBLP:journals/acta/QueilleS83,Francez:1986:FAI:19247}. Formally,
fairness is a fragment of $\LTL$, which can be defined as follows:

\begin{definition}[{\cite{DBLP:journals/fac/Sistla94}}]\label{def:fair}
  An $\LTL$ formula $\phi$ is a \emph{fairness} iff for any $\rho\in(2^\AP)^\omega$,
  \begin{enumerate}
  \item the set of sequences satisfying $\phi$ is
  closed under suffixes, i.e., $\rho\models\phi$ implies
  $\suffix{\rho}{i}\models\phi$ for any $i\ge 0$;
 \item the set of sequences satisfying $\phi$ is closed under
   prefixes, i.e., $\rho\models\phi$ implies
   $\rho_1\concat\rho\models\phi$ for any
   $\rho_1\in(2^\AP)^*$.
  \end{enumerate}
\end{definition}

We shall refer properties defined in
Definition~\ref{def:fair} as \emph{fair formulas} or \emph{fairness} in the following.
According to Definition~\ref{def:fair}, the following lemma is straightforward:
\begin{lemma}\label{lemma:fair-chara}
  $\phi$ is a fairness iff $\phi\equiv\G\phi$ and $\phi\equiv\F\phi$.
\end{lemma}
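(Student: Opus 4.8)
The plan is to unfold the semantics of $\G$ and $\F$ and match them directly against the two closure conditions of Definition~\ref{def:fair}. The two facts that drive everything are that $\rho\models\G\phi$ holds iff $\suffix{\rho}{i}\models\phi$ for \emph{all} $i\ge 0$, while $\rho\models\F\phi$ holds iff $\suffix{\rho}{i}\models\phi$ for \emph{some} $i\ge 0$; moreover, in both the index $i=0$ gives $\suffix{\rho}{0}=\rho$ itself. Intuitively, then, $\G$ encodes suffix-closure and $\F$ encodes prefix-closure, once one handles the reflexive ($i=0$) instance that makes each equivalence also include $\rho$ itself.

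For the forward direction, I would assume $\phi$ is a fairness and prove the two equivalences separately. For $\phi\equiv\G\phi$: if $\rho\models\phi$, suffix-closure gives $\suffix{\rho}{i}\models\phi$ for all $i\ge 0$, which is exactly $\rho\models\G\phi$; conversely, $\rho\models\G\phi$ yields in particular the $i=0$ conjunct $\suffix{\rho}{0}=\rho\models\phi$. For $\phi\equiv\F\phi$: if $\rho\models\phi$, the witness $i=0$ already gives $\rho\models\F\phi$; conversely, if $\rho\models\F\phi$ with some witness $i$, I write $\rho=\rho_1\concat\suffix{\rho}{i}$, where $\rho_1=A_0\cdots A_{i-1}\in(2^\AP)^*$ is the length-$i$ prefix (empty when $i=0$), and apply prefix-closure to $\suffix{\rho}{i}\models\phi$ to recover $\rho\models\phi$.

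For the backward direction, I would assume $\phi\equiv\G\phi$ and $\phi\equiv\F\phi$ and verify both closure properties. Suffix-closure is immediate: $\rho\models\phi$ gives $\rho\models\G\phi$, hence $\suffix{\rho}{i}\models\phi$ for every $i\ge 0$. For prefix-closure, take any $\rho\models\phi$ and any $\rho_1\in(2^\AP)^*$ with $\len{\rho_1}=k$; then $\rho$ is the suffix $\suffix{(\rho_1\concat\rho)}{k}$, so $\rho_1\concat\rho\models\F\phi$, and since $\phi\equiv\F\phi$ we conclude $\rho_1\concat\rho\models\phi$.

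I do not expect any genuine obstacle here: the argument is a direct translation between the modal operators and the closure conditions. The only place requiring care is the bookkeeping of indices—specifically remembering that the $i=0$ instance makes $\rho$ its own suffix (so each equivalence carries the reflexive inclusion) and correctly splitting $\rho$ at the witness position when reversing $\F$. Should the convention for the empty prefix or for $\len{\cdot}$ differ, I would simply double-check the degenerate case $i=0$ (equivalently $\rho_1$ empty), which is routine.
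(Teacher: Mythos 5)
Your proof is correct and is exactly the direct unfolding of the semantics of $\G$ and $\F$ against the two closure conditions of Definition~\ref{def:fair} that the paper has in mind when it calls the lemma ``straightforward'' (no explicit proof is given there). Your handling of the $i=0$ case and of the split $\rho=\rho_1\concat\suffix{\rho}{i}$ is the only bookkeeping the argument needs, and you do it correctly.
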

As a result of Lemma~\ref{lemma:fair-chara}, we can add any number of $\F$
and $\G$ in front of a fairness without changing its semantics. For instance,
fairness $\F a \lor \G\neg a$ is equivalent to $\G\F(\F a \lor \G\neg a)$.

As usual we consider models given as Kripke structures, which are
formally defined as follows:
\begin{definition}\label{def:kripke}
  A \emph{Kripke structure} is a tuple $\Kripke:=\aKripke$
  where $\states$ is a finite set of states, $\initialState\in\states$ is the initial state,
  $\tranRel\subseteq\states\times\states$ is a set of transitions, and
  $\labels:\states\rightarrow 2^\AP$ is a labeling
  function. We
  assume that for each $s\in\states$, there exists $s'\in\states$
  such that $(s,s')\in\tranRel$.
\end{definition}

We fix a Kripke structure $\Kripke=\aKripke$
throughout the remainder of the paper. Let $r,s,t,\ldots$ range over
$\states$. Let $\pathSet^\omega(s)\subseteq\states^\omega$ denote the set
of infinite paths starting from $s$ such that
$\paths\in\pathSet^\omega(s)$ iff $\pathAt{\paths}{0}=s$ and for any $i\ge 0$,
$(\pathAt{\paths}{i},\pathAt{\paths}{i+1})\in\tranRel$. Similarly, we can define $\pathSet^*(s)$, i.e., finite paths
in $\Kripke$ starting from $s$. Let
$\pathSet^\omega(\Kripke)=\pathSet^\omega(\initialState)$ and
$\pathSet^*(\Kripke)=\pathSet^*(\initialState)$.
Given $\paths\in\states^\omega$, let $\trace(\paths)$ denote the trace of $\paths$ such that
$\pathAt{\trace(\paths)}{i}=\labels(\pathAt{\paths}{i})$ for all $i\ge
0$, i.e., $\trace(\paths)$ denotes the sequence of labels of states in
$\paths$. For an $\LTL$ formula $\phi$, we write $\paths\models\phi$ iff
$\trace(\paths)\models\phi$; $s\models\phi$ iff $\paths\models\phi$ for all
$\paths\in\pathSet^\omega(s)$; $\Kripke\models\phi$ iff
$\initialState\models\phi$. Given an $\LTL$ formula $\phi$ and a
fairness $\phi_f$, $\Kripke$ satisfies $\phi$ under the assumption
$\phi_f$ iff $\Kripke\models(\phi_f\implies\phi)$.
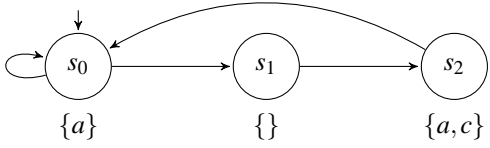
\begin{figure}
	\centering
	\begin{tikzpicture}[>=stealth',shorten >=1pt,auto]	
    \node (K) at (0,0) {};

    \node[state] (s0) at ($(K) + (0,0)$) {$s_0$};
	\node[state] (s1) at ($(K) + (2.5,0)$) {$s_1$};
	\node[state] (s2) at ($(K) + (5,0)$) {$s_2$};

    \node (s0L) at ($(s0) + (0, -0.8)$) {$\{a\}$};
    \node (s1L) at ($(s1) + (0, -0.8)$) {$\{\}$};
    \node (s2L) at ($(s2) + (0, -0.8)$) {$\{a,c\}$};

	\draw[->] ($(s0.north) + (0,0.3)$) to (s0.north);
	\draw[->] (s0) edge [loop left] node {} (s0);
	\draw[->] (s0) edge  node {} (s1);
	\draw[->] (s1) edge node {} (s2);
	\draw[->] (s2) edge [bend right] node {} (s0);
	\end{tikzpicture}
	\caption{An example of Kripke structure}
	\label{fig:exampleKripke}
\end{figure}

\begin{example}\label{ex:kripke}
An example for Kripke structure is $\Kripke = (\{s_0, s_1, s_2\},s_0,\tranRel,\{a,b,c\},\labels)$,
where $\tranRel$ and $\labels$ are depicted in Figure~\ref{fig:exampleKripke}, for instance $\labels(s_0) = \{a\}$.
Obviously, traces in $\Kripke$ can be represented as $(\{a\}^*\{\}\{a,c\})^{*}\{a\}^\omega \mid (\{a\}^*\{\}\{a,c\})^\omega$.

\end{example}
Moreover, we directly conclude the corollary below from Lemma~\ref{lemma:fair-chara}:
\begin{corollary}\label{coro:suffixpath}
Let $\paths\in\pathSet^\omega(\Kripke)$ and $\phi$ a fairness. Then for any index $j\geq 0$, $\paths\models \phi$ iff $\suffix{\paths}{j}\models \phi$.
\end{corollary}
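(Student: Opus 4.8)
The plan is to prove Corollary~\ref{coro:suffixpath} as a direct consequence of Lemma~\ref{lemma:fair-chara}, which tells me that a fairness $\phi$ satisfies both $\phi \equiv \G\phi$ and $\phi \equiv \F\phi$. The statement to establish is that for any path $\paths \in \pathSet^\omega(\Kripke)$, any fairness $\phi$, and any index $j \geq 0$, we have $\paths \models \phi$ iff $\suffix{\paths}{j} \models \phi$. The key observation is that $\paths \models \phi$ is defined to mean $\trace(\paths) \models \phi$, so the entire argument reduces to the trace level, where Definition~\ref{def:fair} and Lemma~\ref{lemma:fair-chara} directly apply.

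First I would handle the forward direction. Suppose $\paths \models \phi$, i.e., $\trace(\paths) \models \phi$. Since $\phi$ is a fairness, by clause~(1) of Definition~\ref{def:fair} the set of sequences satisfying $\phi$ is closed under suffixes, so $\suffix{\trace(\paths)}{j} \models \phi$ for every $j \geq 0$. Because the trace operator commutes with taking suffixes --- concretely, $\trace(\suffix{\paths}{j}) = \suffix{\trace(\paths)}{j}$, which follows immediately from the definition $\pathAt{\trace(\paths)}{i} = \labels(\pathAt{\paths}{i})$ --- this yields $\trace(\suffix{\paths}{j}) \models \phi$, i.e., $\suffix{\paths}{j} \models \phi$. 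Alternatively, one can invoke $\phi \equiv \G\phi$ from Lemma~\ref{lemma:fair-chara}: $\trace(\paths) \models \G\phi$ means every suffix satisfies $\phi$, giving the same conclusion.

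For the converse direction, suppose $\suffix{\paths}{j} \models \phi$, i.e., $\trace(\suffix{\paths}{j}) = \suffix{\trace(\paths)}{j} \models \phi$. I would use clause~(2) of Definition~\ref{def:fair}: the set of sequences satisfying $\phi$ is closed under prefixes, so prepending the finite prefix $\prefix{\trace(\paths)}{j-1}$ (the labels of the first $j$ states) recovers $\trace(\paths) \models \phi$. Equivalently, one can appeal to $\phi \equiv \F\phi$ from Lemma~\ref{lemma:fair-chara}: since some suffix of $\trace(\paths)$ satisfies $\phi$, the sequence $\trace(\paths)$ itself satisfies $\F\phi$ and hence $\phi$. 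Combining both directions gives the biconditional.

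I expect no genuine obstacle here; the corollary is essentially a restatement of the two closure properties of fairness, now phrased for paths rather than for raw trace sequences. The only point requiring a small amount of care is the bookkeeping identity $\trace(\suffix{\paths}{j}) = \suffix{\trace(\paths)}{j}$, which bridges the path-level statement and the trace-level definitions on which Lemma~\ref{lemma:fair-chara} operates; once this is noted, both implications are immediate from the respective closure clauses.
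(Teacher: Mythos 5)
Your proof is correct and follows essentially the same route as the paper: the forward direction via $\phi\equiv\G\phi$ (suffix closure) and the converse via $\phi\equiv\F\phi$ (equivalently, prefix closure), with the added but routine observation that $\trace$ commutes with taking suffixes. No issues.
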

\begin{proof}
  Since $\phi$ is a fairness, $\phi \equiv \G\phi$. Thus $\paths\models\phi$
  implies $\suffix{\paths}{j}\models \phi$ for any index $j$. On the other hand,
  $\suffix{\paths}{j}\models \phi$ implies $\paths\models \F\phi$. Then
  we conclude $\paths\models \phi$ by $\F\phi\equiv\phi$.
\end{proof}

Intuitively, we can safely consider only the suffixes of the paths when the given formula
is a fairness.
\section{Model Checking Fairness}
\label{sec:algo}
In this section, we present
an algorithm for model checking fair formulas. We first describe the overall idea.
For fair formula $\phi$, $\Kripke\models\phi$ means that for all infinite paths $\paths$ starting from initial state $\initialState$, $\paths\models\phi$.
Conversely, if $\neg(\Kripke\models\phi)$, then there exists an infinite path $\paths$ such that
 $\paths\models\neg\phi$. Thus, we first construct the negation $\neg\phi$, which is also a fair formula. Then, we
 construct a \emph{fair normal form} of $\neg\phi$, denoted by $\flat(\neg\phi)$, which has the form
 $\lor_{i=1}^m \phi_i$, with each $\phi_i$ being a fair formula. Then, the problem reduces to checking whether
there exists an infinite path $\paths$ such that $\paths\models\phi_i$. In other words, whether there
exists an SCC $B$ \emph{satisfying} for $\phi_i$: the satisfaction here can be checked by analysing the SCC $B$.
A strongly connected component (SCC) $B$ of $\Kripke$ is a state set such that for any $s,t\in B$, there exists a path from $s$ to $t$.
We here do not consider trivial SCCs which are single states without self loops.

We start with treating fairness formulas in $\FG$, then we
extend the algorithm to deal with general fairness. Finally, we handle all $\LTL$ formulas with fairness assumptions.
\subsection{Fairness in $\FG$}
\label{sec:fgfg}
In this subsection we focus on a fragment of $\LTL$ formulas, denoted
$\FG$, which only contains $\F$ and $\G$ modalities, i.e., it is defined
by the following grammar:
$$
  \phi ::= a \mid \neg a \mid \phi\land\phi \mid \phi\lor\phi \mid \F\phi \mid \G\phi.
$$

For each fairness in $\FG$, we shall show that it can be
transformed into an equivalent formula where all
propositional formulas are directly preceded by precisely two modalities, either
$\F\G$ or $\G\F$.
Such a transformation is purely syntactical: we call the transformation
procedure the \emph{flatten} operation, denoted by $\flat$.

\begin{theorem}\label{thm:normform-fg}
Let $\phi\in\FG$ be a fairness. Then, it can be transformed into the following equivalent formula, referred to also as
its fair normal form:
$$\flat(\phi) := \bigvee_{i=1}^m \left(\F\G l_i\land(\bigwedge_{j=1}^{n_i} \G\F l_{ij})\right)$$
where $l_i$ and $l_{i,j}$ are propositional formulas.
\end{theorem}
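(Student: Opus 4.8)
The plan is to prove a slightly stronger, self-dual statement by structural induction and then instantiate it. Since $\phi$ is a fairness, Lemma~\ref{lemma:fair-chara} gives $\phi\equiv\G\F\phi$, so it suffices to establish the following \emph{Key Lemma}: for every $\FG$ formula $\chi$, both $\G\F\chi$ and $\F\G\chi$ are equivalent to a formula of the target shape $\bigvee_i\bigl(\F\G l_i\land\bigwedge_j\G\F l_{ij}\bigr)$. Applying it to $\chi=\phi$ then yields $\flat(\phi)$. I would set up the induction to prove the two claims simultaneously, since the inductive cases for $\G\F$ refer to $\F\G$ on subformulas and vice versa.

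The engine of the argument is a handful of equivalences that I would establish first: the collapse laws $\F\F\chi\equiv\F\chi$, $\G\G\chi\equiv\G\chi$, $\F\G\F\chi\equiv\G\F\chi$, $\G\F\G\chi\equiv\F\G\chi$; and the distribution laws, where crucially $\G\F$ distributes over $\lor$ (infinitely often $\alpha\lor\beta$ iff infinitely often $\alpha$ or infinitely often $\beta$) and, dually, $\F\G$ distributes over $\land$. With these, most inductive cases are immediate: a propositional leaf $l$ gives the flat literals $\G\F l$ and $\F\G l$, padded with $\F\G\top\equiv\top$ or an empty $\G\F$-conjunction to match the template; a top modality on $\chi=\F\chi'$ or $\chi=\G\chi'$ is absorbed by the collapse laws, after which the induction hypothesis is invoked on $\chi'$; and $\G\F(\chi_1\lor\chi_2)$ and $\F\G(\chi_1\land\chi_2)$ distribute and recombine, where two $\F\G$-factors are merged via $\F\G l\land\F\G l'\equiv\F\G(l\land l')$ so that each disjunct keeps exactly one $\F\G$-conjunct.

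The hard part will be the two non-distributing cases $\G\F(\chi_1\land\chi_2)$ and, dually, $\F\G(\chi_1\lor\chi_2)$; this is exactly where the doubled modality (the fairness closure) must do real work, since without it a formula such as $\F(a\land\G b)$ is not even suffix-closed and admits no such flat form. To handle $\G\F(\chi_1\land\chi_2)$ I would first rewrite $\chi_1\land\chi_2$ into disjunctive normal form over its maximal modal subformulas, obtaining $\bigvee_p T_p$ with each term of the form $T_p=l\land\bigwedge_x\F\theta_x\land\bigwedge_w\G\eta_w$; pushing $\G\F$ across this $\lor$ reduces the case to a single term. The key term identity to prove is
\[
\G\F\Big(l\land\bigwedge_x\F\theta_x\land\bigwedge_w\G\eta_w\Big)\;\equiv\;\F\G\Big(\bigwedge_w\eta_w\Big)\land\G\F l\land\bigwedge_x\G\F\theta_x,
\]
whose forward direction uses that $\G\eta_w$ holding infinitely often forces its eventual stabilization $\F\G\eta_w$, and whose backward direction uses that $\G\F\theta_x$ makes each $\F\theta_x$ true at every point. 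Its right-hand side is a conjunction of formulas that are already in target shape by the induction hypothesis on the proper subformulas $\theta_x,\eta_w$, so recombining (merging $\F\G$-parts, accumulating $\G\F$-parts) preserves the target shape, and the case $\F\G(\chi_1\lor\chi_2)$ follows by the $\F\!\leftrightarrow\!\G$, $\land\!\leftrightarrow\!\lor$ duality. I expect the well-foundedness bookkeeping to be routine, since the modal-DNF step only rearranges Boolean structure and the identity's right-hand side mentions only arguments strictly smaller than $\chi$.
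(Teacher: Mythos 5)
Your proposal is correct, and at its core it performs the same syntactic rewriting as the paper: distribute $\G\F$ over $\lor$ and $\F\G$ over $\land$, collapse stacked modalities, convert the Boolean skeleton to DNF/CNF to expose the non-distributing cases, and push the outer modality across each resulting term using exactly the identities the paper proves --- your single term identity is just the simultaneous application of the paper's rules $\G\F(\phi_1\land\F\phi_2)\equiv\G\F\phi_1\land\G\F\phi_2$ and $\G\F(\phi_1\land\G\phi_2)\equiv\G\F\phi_1\land\F\G\phi_2$ to all modal conjuncts of a term at once. The organization differs in one respect worth noting. The paper defines a bottom-up flatten operator whose intermediate results carry only single modalities ($\G l$, $\F l$); since these intermediate formulas are not themselves fair, the paper must introduce a weaker \emph{cyclic equivalence} $\equiv_c$ (agreement on cyclic words only) to justify each rewriting step, and recovers genuine equivalence only at the end because $\equiv$ and $\equiv_c$ coincide on fairness. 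You instead prove a key lemma about $\G\F\chi$ and $\F\G\chi$ for arbitrary $\chi\in\FG$ by simultaneous induction; since these prefixed formulas are always fair, every step is a true LTL equivalence and the $\equiv_c$ device is unnecessary. Your well-foundedness worry about $\F\G(\bigwedge_w\eta_w)$ not being a subformula is indeed routine (measure by the number of temporal operators, which drops since each $\eta_w$ has shed its $\G$). The trade-off is a cleaner soundness argument at the cost of carrying doubled modalities through the induction; both routes yield the same normal form and the same worst-case exponential blow-up from the Boolean normal-form conversions.
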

Note that $m$ and $n_i$ are nonnegative integers and we omit $\F\G l_i$ and $\G\F l_{ij}$ in the fair normal form
whenever $l_i$ and $l_{ij}$ are $\top$.
The syntactic transformation $\flat$ is the key of the algorithm: $\flat(\phi)$ can be checked directly on $\Kripke$ without constructing the product automaton.
We first give an example to illustrate the main stapes of verifying fair formulas in $\FG$.

\begin{example}\label{ex:flatncheck}
Take $\phi=\neg(\F\G(a \lor (\F b \land \G c)))$ for example, the fair normal form of $\neg\phi$
is $\flat(\neg\phi)=\F\G a \lor(\F\G c \land \G\F b)$. Consider model checking the Kripke structure
$\Kripke$ in Example~\ref{ex:kripke} against fair formula $\phi$. We already have the
fair normal form of $\neg\phi$ above, so we only need to check whether there exists an SCC satisfying
fair formula $\F\G a$ or $\F\G c\land \G\F b$. Consider fair formula $\F\G a$, we find that there exists an SCC
$\{s_0\}$ reachable from initial state $s_0$ that fulfils the formula. We therefore conclude that $\Kripke$ does not
satisfy $\phi$ and give a counterexample $\paths=(s_0)^\omega$ such that $\paths\models\neg\phi$, thus $\paths\not\models\phi$.
\end{example}

We below give the intuition behind the syntactic transformation.

First, we have to deal with  trivial fair formula such as $\F a\lor \G \neg a$.
Due to Lemma~\ref{lemma:fair-chara}, we first add $\G\F$ in front of the original formula
and then apply the flatten operation, which gives us the fair normal form $\G\F a \lor \F\G \neg a$.

Given a fairness $\phi\in\FG$, our goal is to obtain an equivalent formula of the fair normal form.
To that end, we first make sure that there exists at least one $\F\G$ or $\G\F$ in front of every
propositional formula, which is guaranteed by safely adding $\G\F$ in front of $\phi$. After that, we
are going to push every $\F\G$ and $\G\F$ directly in front of all propositional formulas. To achieve
this, one needs to discuss the distributivity of $\G\F$ and $\F\G$ over $\lor$ and $\land$.

Suppose $\phi_1, \phi_2\in \LTL$, our goal is pushing $\G\F$ and $\F\G$ inside
such that they appear only before propositional formulas.
We consider the following four cases:
\begin{enumerate}
\item\label{rm-dup} $\G\F\G\equiv\F\G, \G\F\F\equiv\G\F, \F\G\G\equiv\F\G$ and $\F\G\F\equiv\G\F$ are trivial
according to the semantics of $\LTL$. This insures that we have only $\G\F$ and $\F\G$
modalities since we first add $\G\F$ in front of $\phi$.
\item\label{push-dis} $\G\F(\phi_1\lor\phi_2)\equiv \G\F\phi_1\lor\G\F\phi_2$
and $\F\G(\phi_1\land\phi_2)\equiv\F\G\phi_1\land\F\G\phi_2$ hold since $\G\F$ and $\F\G$ are
distributive over $\lor$ and $\land$ operator respectively.
\item\label{push-diff} $\G\F(\phi_1\land\F \phi_2)\equiv\G\F\phi_1\land\G\F\phi_2$,
$\F\G(\phi_1\lor\G\phi_2)\equiv\F\G\phi_1\lor\F\G\phi_2$,
$\G\F(\phi_1\land\G\phi_2)\equiv\G\F\phi_1\land\F\G\phi_2$ and
$\F\G(\phi_1\lor\F\phi_2)\equiv\F\G\phi_1\lor\G\F\phi_2$. Intuitively, if the operands of
$\G\F$ and $\F\G$ are not propositional formulas, they must be the four cases we listed here
after we go through case~\ref{push-dis}) and case~\ref{push-ant-dis}).
\item\label{push-ant-dis} $\G\F (\phi_1\land \phi_2)$ and $\F\G (\phi_1\lor\phi_2)$. This is the
most challenging part since $\G\F$ ($\F\G$) is not distributive over $\land$ ($\lor$) operator.
The following procedure relies on the structure of the formula. If the operand of $\G\F$ or $\F\G$ is propositional formula, then it is already the formula we desire.
Otherwise, if they are not case~\ref{push-diff}) such as the formula $\F\G(a \lor (\F b \land \G c))$, we transform $\phi_1\land\phi_2$ and $\phi_1\lor\phi_2$ to disjunctive normal form (DNF) and conjunctive normal form (CNF) respectively. After that, we apply case~\ref{push-dis}) and may use case~\ref{push-diff}) for further processing.

\end{enumerate}
Once we get a formula where all propositional formulas are adjacent to $\G\F$ or $\F\G$, we
transform it into DNF, which gives us the fair normal form in Theorem~\ref{thm:normform-fg}.
We illustrate the procedure of $\flat$ operator via an
example as follows:
\begin{example}\label{ex:flat}
Let $\phi=\F\G(a \lor (\F b \land \G c))$. We show how to flatten $\phi$ step by step.
\begin{itemize}
\item Add $\G\F$ in front of $\phi$ which gives us $\phi$ since $\G\F\F\G\equiv\F\G$;
\item Since $\phi$ is an instance of case \ref{push-ant-dis}), we first transform $a \lor (\F b \land \G c)$
into a CNF, which results in $\F\G((a \lor \F b) \land (a\lor \G c))$;
\item According to case \ref{push-dis}), $\F\G$ is distributive over $\land$ operator, we therefore directly
push $\F\G$ inside, which gives us $\F\G(a \lor \F b) \land \F\G (a\lor \G c)$;
\item The resulting formula is an instance of case $\ref{push-diff}$), we get $ (\F\G a \lor \G\F b)\land (\F\G a \lor \F\G c)$
after we apply the equations in case $\ref{push-diff}$).
\item Since all $\F\G$ are adjacent to propositional formulas, we transform above formula to DNF, which gives us
the fair normal form $\F\G a \lor (\F\G c\land\G\F b)$ of $\phi$.
\end{itemize}
Intuitively, it means whenever $\paths\models\phi$, it must be the
case that $\paths$ ends up with a loop such that either all states on
the loop satisfy $a$ or all states satisfy $c$ and at least one state satisfies $b$.
This also can be verified by applying the semantics of $\LTL$.

\end{example}
By Corollary~\ref{coro:suffixpath}, we only need to
consider the infinite suffixes of the paths that all states will be visited infinitely often. That is to say, we only need to consider all the SCCs of $\Kripke$ that can be
reached.

\begin{definition}[Accepting SCC]\label{def:accepting-SCC-FG}
Given a formula $\phi = \F\G l \land (\bigwedge_{j=1}^{m} \G\F
 l_{j})$ and an SCC $B$. If 1) for every state $s \in B$, $s\models l$ and 2) for each $j$, there exists $s\in B$, such that $s \models l_j$, then we say SCC $B$ is accepting for $\phi$.
\end{definition}
With the definition of accepting SCC, we have the following theorem:
\begin{theorem}\label{thm:existspath-FG}
 For any $\phi = \F\G l \land (\bigwedge_{j=1}^{m} \G\F
 l_{j}) $, there exists an infinite path $\paths$ in $\Kripke$ such that $\paths\models\phi$ iff
 there exists a reachable SCC $B$ 
 such that $B$ is accepting for $\phi$.
\end{theorem}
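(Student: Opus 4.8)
The plan is to prove both directions of the biconditional by connecting infinite paths to the SCCs they eventually settle into. The central observation, justified by Corollary~\ref{coro:suffixpath}, is that since $\phi$ is a fairness, satisfaction is invariant under taking suffixes, so we may freely replace any path by a suffix of it. I would first establish the elementary fact that every infinite path $\paths$ in the finite Kripke structure $\Kripke$ eventually enters and stays within a single (bottom) SCC: because $\states$ is finite, the set $\infStates(\paths)$ of states visited infinitely often along $\paths$ is nonempty, and by the definition of paths (every state has a successor and consecutive states are connected) these states are mutually reachable from one another, hence form an SCC $B$. Moreover $B$ is reachable from $\initialState$ since $\paths$ starts at $\initialState$.

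For the \emph{right-to-left} direction, suppose there is a reachable accepting SCC $B$ for $\phi=\F\G l\land(\bigwedge_{j=1}^m\G\F l_j)$. I would construct an explicit path witnessing $\paths\models\phi$: take a finite prefix from $\initialState$ into $B$, then build an infinite suffix staying inside $B$ that visits every state of $B$ infinitely often (possible because $B$ is strongly connected and nontrivial, so one can cycle through all its states). Since every $s\in B$ satisfies $l$, every state from the point of entry onward satisfies $l$, giving $\suffix{\paths}{k}\models\G l$ for the entry index $k$, hence $\paths\models\F\G l$. For each conjunct $\G\F l_j$, the chosen witness state $s\models l_j$ in $B$ is visited infinitely often, so $l_j$ holds infinitely often, giving $\paths\models\G\F l_j$. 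Thus $\paths\models\phi$.

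For the \emph{left-to-right} direction, suppose $\paths\models\phi$. Let $B=\infStates(\paths)$ be the SCC into which $\paths$ ultimately settles, as constructed above; it is reachable. I must check $B$ is accepting. From $\paths\models\F\G l$ there is an index $i$ with $\suffix{\paths}{i}\models\G l$, so every state occurring in $\paths$ from position $i$ onward satisfies $l$; in particular every state of $B$, each of which appears at some position $\ge i$, satisfies $l$, establishing condition~1 of Definition~\ref{def:accepting-SCC-FG}. From $\paths\models\G\F l_j$ for each $j$, the propositional formula $l_j$ holds at infinitely many positions, so some state satisfying $l_j$ is visited infinitely often and therefore lies in $B$, establishing condition~2. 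Hence $B$ is accepting for $\phi$.

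The main obstacle, and the step deserving the most care, is the left-to-right direction's claim that \emph{every} state of $B$ satisfies $l$ and that the state witnessing each $l_j$ actually lies in $B$. This rests on correctly handling the interaction between the temporal modalities $\F\G$ and $\G\F$ and the set of infinitely visited states: one must argue that $\G l$ holding on some suffix forces $l$ on all of $\infStates(\paths)$, not merely on one tail, and that $\F l_j$ holding infinitely often forces a single $l_j$-state to recur (which uses finiteness of $\states$ to pass from "infinitely many $l_j$-positions" to "some fixed $l_j$-state recurs"). I would make these pigeonhole-style arguments explicit, since they are where the equivalence genuinely relies on the finiteness of $\Kripke$ and the precise semantics of the two modalities.
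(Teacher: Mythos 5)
Your proposal is correct and follows essentially the same route as the paper: both directions hinge on identifying the set of states visited infinitely often as a reachable SCC (the paper invokes Corollary~\ref{coro:suffixpath} to pass to the suffix where this set is all that remains, which is the same reduction you make via $\infStates(\paths)$), and the converse is the same explicit path construction cycling through all of $B$. Your version merely spells out the pigeonhole step for the $\G\F l_j$ witnesses a bit more explicitly than the paper does; no substantive difference.
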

\begin{proof}
\begin{enumerate}
\item[$\Rightarrow$]
Since $\Kripke$ is finite, for any $\paths\in\pathSet^\omega(\Kripke)$, there exists a smallest index $k$, such that all states in $\suffix{\paths}{k}$ will be visited by
  infinite times. By Corollary~\ref{coro:suffixpath}, it suffices to show that $\paths\models\phi$ iff
   $\suffix{\paths}{k}\models\phi$ since
   one can check that $\phi$ is a fairness. For convenience, let $\paths_1 = \suffix{\paths}{k}$.

Let $B_1$ be the set of states on $\paths_1$. Obviously, all states in $B_1$ are connected since all states will be visited by infinite times.
$\paths_1\models\F\G l$ means $s\models l$ for each $s\in B_1$ and $\paths\models\G\F l_j$ means that there exists $s\in B_1$ such that
$s\models l_j$ for each $l_j$. Let $B=B_1\subseteq\states$, then $B$ is an SCC and is accepting for $\phi$.
\item[$\Leftarrow$] This direction is trivial, since we can always construct a path $\paths_2$ that
starts from any $s\in B$ and visits all states in $B$ by infinite times.
Since B is reachable, we can find
a finite path $\paths_1$ which starts from the initial state and reaches the first state of
$\paths_2$. Let $\paths = \paths_1\concat\paths_2$. Obviously $\paths\models\F\G l \land (\bigwedge_{j=1}^{m} \G\F
 l_{j}) $, thus we complete the proof.
\end{enumerate}
\end{proof}

\begin{algorithm}[!t]
    \caption{The procedure $\mathtt{fairMC}$ for
      checking whether $\Kripke\models\phi$, where $\phi$ is a fair
      formula in $\FG$. $\mathtt{fairMC}(\phi,\Kripke)$ returns
      $\mathit{True}$ if $\Kripke\models\phi$, and $\mathit{False}$ otherwise.}\label{alg:fgfg}
    \begin{algorithmic}[1]
       \Procedure{$\mathtt{fairMC}$}{$\phi,\Kripke$}
        \State $\flat(\neg\phi) \equiv \lor_{i=1}^m \phi_i
        = \lor_{i=1}^m(\F\G l_i\land(\bigwedge_{j=1}^{n_i}\G\F l_{i,j}))$;
        \ForAll{($1\le i\le m$)}\label{ln:flat}
            \State $B\leftarrow\{s\in\states\mid s\models l_i\}$;
            \If{$B\neq \emptyset$}
                \ForAll{(SCC $B'\subseteq B$)}
                    \If{{($B'$ is accepting for $\phi_{i}$)}\label{ln:accept}}
                        	\State \Return $\mathit{False}$;
                    \EndIf
                \EndFor
            \EndIf
        \EndFor
   \State \Return $\mathit{True}$;
   \EndProcedure
    \end{algorithmic}
\end{algorithm}
Based on Theorem~\ref{thm:existspath-FG}, Algorithm~\ref{alg:fgfg} describes the procedure to determine
whether all paths in $\Kripke$ satisfy a given fair formula $\phi$
in $\FG$. For this, the algorithm first syntactically transforms
$\neg\phi$ into an equivalent formula of the form
$ \bigvee_{i=1}^m \F\G l_i\land(\bigwedge_{j=1}^{n_i} \G\F l_{i,j})$.
For each $1\le i\le m$, we then try to find an
\emph{accepting SCC} $B'$ such that all states in $B'$ satisfy $l_i$
and at least one state in $B'$ satisfies $l_{i,j}$ for each $1\le j\le
n_i$. In case an accepting SCC is found, there exists a path in
$\Kripke$ violating $\phi$, hence $\Kripke\not\models\phi$; otherwise
we conclude that $\Kripke\models\phi$.

By Theorem~\ref{thm:existspath-FG}, the soundness and completeness of Algorithm~\ref{alg:fgfg} immediately follows.

Note the checking of whether a SCC $B'$ is accepted by $\phi_i$ in
line~\ref{ln:accept} of Algorithm~\ref{alg:fgfg} can be done easily in
linear time with respect to $\size{B'}$.
Let $\size{\Kripke}$ denote
the size of the given model, i.e., the
total number of states and transitions. The complexity of
Algorithm~\ref{alg:fgfg} is shown in the following theorem.
\begin{theorem}\label{thm:fgfg-complexity}
  Algorithm~\ref{alg:fgfg} runs in time $\mathcal{O}(\size{\Kripke}\times 2^{\size{\phi}})$ and in space $\mathcal{O}(\size{\Kripke}+\size{\phi}\times 2^{\size{\phi}})$.
\end{theorem}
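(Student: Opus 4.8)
The plan is to charge the running time and memory to three phases of Algorithm~\ref{alg:fgfg}: (i) building the fair normal form $\flat(\neg\phi)$ on line~2, (ii) for each disjunct $\phi_i$, computing the candidate set $B$ and decomposing the induced subgraph into SCCs, and (iii) testing each SCC for acceptance. The entire exponential factor lives in phase (i), so the crux is to bound both the number $m$ of disjuncts and the total size of $\flat(\neg\phi)$.

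First I would bound $\flat(\neg\phi)$. Every propositional formula $l_i$ and $l_{i,j}$ is assembled from literals already occurring in $\phi$, hence has size $O(\size{\phi})$, and each disjunct $\phi_i = \F\G l_i \land \bigwedge_{j=1}^{n_i}\G\F l_{i,j}$ contains at most $O(\size{\phi})$ such subformulas. The number of disjuncts can explode only through the DNF/CNF conversions invoked in case~\ref{push-ant-dis}) of the flatten procedure and through the final DNF step; tracking these conversions shows $m = O(2^{\size{\phi}})$. Consequently $\size{\flat(\neg\phi)} = O(\size{\phi}\times 2^{\size{\phi}})$, and the transformation itself runs in time proportional to its output. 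This accounting of $m$ and of the formula size is the main obstacle, since it is exactly where one must be careful about how distributing $\F\G$/$\G\F$ over $\land$/$\lor$ multiplies the number of terms.

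For the time bound I would then analyse the main loop. For a fixed $i$, computing $B=\{s\in\states\mid s\models l_i\}$ is a single sweep over the states, and the SCC decomposition of the subgraph induced by $B$ costs $O(\size{\Kripke})$ via Tarjan's algorithm. By the remark preceding the theorem, testing whether an SCC $B'$ is accepting for $\phi_i$ takes time linear in $\size{B'}$; since the SCCs partition $B$, the total acceptance testing for disjunct $i$ is again $O(\size{\Kripke})$. Thus each of the $m$ iterations costs $O(\size{\Kripke})$, and summing over the $m = O(2^{\size{\phi}})$ disjuncts yields the claimed time $O(\size{\Kripke}\times 2^{\size{\phi}})$; the cost of phase (i) is dominated by this.

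Finally, for space I would observe that the only object of exponential size that must be held in memory is $\flat(\neg\phi)$, contributing $O(\size{\phi}\times 2^{\size{\phi}})$. The Kripke structure together with the set $B$ and the auxiliary data structures of the SCC computation require $O(\size{\Kripke})$, and, crucially, this working memory is reused from one disjunct to the next rather than accumulated, so no extra factor of $m$ appears. Adding the two contributions gives the stated space bound $O(\size{\Kripke}+\size{\phi}\times 2^{\size{\phi}})$.
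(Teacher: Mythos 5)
Your overall architecture matches the paper's: the exponential factor is charged entirely to the size of $\flat(\neg\phi)$, each of the $m$ disjuncts is then dispatched in time $\mathcal{O}(\size{\Kripke})$ by one SCC decomposition plus linear acceptance tests, and the space bound follows because the SCC working memory is reused across disjuncts while only the normal form itself is exponentially large. The per-disjunct analysis and the space accounting are fine (modulo the same harmless looseness as the paper, e.g.\ evaluating a propositional $l_i$ at a state actually costs $\mathcal{O}(\size{\phi})$ rather than $\mathcal{O}(1)$, which both you and the paper absorb silently).

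The gap is exactly at the point you yourself identify as the crux: the claim that ``tracking these conversions shows $m=\mathcal{O}(2^{\size{\phi}})$'' is asserted, not argued, and a naive tracking does \emph{not} give this bound. The flatten procedure interleaves CNF and DNF conversions at nested levels (rule $\flat(\G\phi)=\flat_{\mathit{dnf}}(\flat_\G(\flat_{\mathit{cnf}}(\flat(\phi))))$), and composing an exponential blow-up with another exponential blow-up on its output would, without further argument, yield a doubly exponential bound. The paper closes this by a counting argument over a fixed ground set: let $n_1$ (resp.\ $n_2$) be the number of propositional subformulas of $\phi$ first preceded by $\G$ (resp.\ $\F$), so $n_1+n_2=\mathcal{O}(\size{\phi})$; every intermediate formula is a positive Boolean combination of the $n_1+n_2$ ``atoms'' $\F\G l$ and $\G\F l'$, conjunctions of $\F\G$-atoms collapse into a single $\F\G$ of the conjunction, and hence after the final DNF there are at most $2^{n_1}\cdot 2^{n_2}$ distinct disjuncts, each with at most $n_2+1$ propositional formulas. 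Without some such observation (that all terms are drawn from subsets of a linear-size atom set, with deduplication), your bound on $m$, and therefore both stated complexity bounds, does not follow.
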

Due to case~\ref{push-ant-dis}) in the explanation of Theorem~\ref{thm:normform-fg} and
the transformation that gives a formula of DNF, the resulting formula length can be $\mathcal{O}(2^{\size{\phi}})$ in the worst case.
Suppose $n_1$ is the number of propositional formulas first preceded by $\G$, and
$n_2$ for number of propositional formulas first preceded by $\F$, obviously $n_1 + n_2 \in
\mathcal{O}(\size{\phi})$. We then have $2^{n_1}$ options for $\F\G l$ formulas and
$2^{n_2}$ for $\land_{k} \G\F l_k$ since the number of $l_k$ is $n_2$,
 so we will at most have $2^{n_1 + n_2}$
formulas have the form $\F\G l\land (\bigwedge_{k=1}^{m}\G\F l_{k})$ and each formula
of that form at most has $n_2 + 1$ propositional formulas,
which means that formula length can be $\size{\phi}\times 2^{\mathcal{O}(\size{\phi})}$ in the worst case. That is,
we will at most have $2^{n_1 + n_2}$
formulas with the form $\F\G l\land (\bigwedge_{k=1}^{m}\G\F l_{k})$, and the time for model checking
$\F\G l\land (\bigwedge_{k=1}^{m}\G\F l_{k})$ will be $\size{\Kripke}$ to traverse
all SCCs.
Comparing to the classical algorithm presented
in~\cite{DBLP:conf/lics/VardiW86}, Algorithm~\ref{alg:fgfg} has the same
time complexity. However, experiment shows that our
algorithm achieves much better performance comparing to the classical one.
Furthermore, Algorithm~\ref{alg:fgfg} reduces the space
complexity from $\mathcal{O}(\size{\Kripke}\times 2^{\mathcal{O}(\size{\phi}}))$ to
$\mathcal{O}(\size{\Kripke} + 2^{\mathcal{O}(\size{\phi}}))$ for fairness in $\FG$.

\subsection{Expressiveness of fairness in $\FG$}
\label{subsec:expr-fg}
We have presented an efficient  algorithm to handle the fairness in $\FG$. The question
then arises whether  fair formulas in $\FG$ are expressive enough to encode all fair formulas in $\LTL$?
First,  one can easily verify that the fairness LTL formula $\F\G(a\U b)$ is equivalent to $\F\G(a\lor b)\land\G\F b$.
Intuitively, eventually there is a looping path that satisfies $a\U b$ at every position
is equivalent to that eventually there is a loop path that every state satisfies $a\lor b$ and
there exists at least one state on the loop that satisfies $b$.

The transformation does not work in general. In the following, we show
that $\phi = \F\G(a\lor\X(b \U c))$  can not be expressed by
any fairness in $\FG$. It is easy to see that $\phi$ is a fairness by
Lemma~\ref{lemma:fair-chara}.  But it is impossible to find an
equivalent formula in $\FG$ to represent $\phi$ since the order of
states in SCC matters. We show that $\phi$ can not be
represented as a fairness in $\FG$ by an example in the following.

For the trace $\eta = (\{a\}\{\}\{a,c\})^\omega$ of $\Kripke$ in Example~\ref{ex:kripke},
there are three kinds of letters, namely $\{a\}, \{\}$ and $\{a,c\}$. It
is trivial that $\{a\}\models a$ or $\{a,c\}\models a$. For the word starting from letter $\{\}$, we have
 $\{\}\{a,c\}\cdots\models \X(b \U c)$ since every letter $\{\}$ is directly followed by the letter $\{a,c\}$.
Thus we conclude that $\eta \models \phi$.

By Theorem~\ref{thm:normform-fg}, suppose $\phi\equiv\bigvee_{i=1}^m (\F\G l_i \land (\bigwedge_{j=1}^{n_i} \G\F l_{i,j}))$ holds,
we have $\eta\models\bigvee_{i=1}^m (\F\G l_i \land (\bigwedge_{j=1}^{n_i} \G\F l_{i,j}))$. In other
words, there exists $1\leq i\leq m$ such that $\eta\models \F\G l_i \land (\bigwedge_{j=1}^{n_i} \G\F l_{i,j})$.
Further, we conclude that for any $k\geq 0$, $\suffix{\eta}{k}\models l_i$ and for every $l_{i,j}$, there is at least one out of
letters $\{a\},\{\}$ and $\{a,c\}$ must satisfy $l_{i,j}$.
As a result, $(\{a\}\{a,c\}\{\})^\omega\models \F\G l_i \land (\bigwedge_{j=1}^{n_i} \G\F l_{i,j})$,
which follows that $(\{a\}\{a,c\}\{\})^\omega \models \phi$. Contradiction.

Thus we conclude that fairness in $\FG$ is not powerful enough to express all fairness in $\LTL$.

\subsection{Fairness in $\LTL$}
\label{sec:fairness-properties}
In this subsection we deal with arbitrary fair formulas including
those not expressible in $\FG$.
More notations are needed.
Given $B\subseteq\states$ and $s\in B$,
let $\Kripke^s_B:=(B,s,\tranRel_B,\labels_B)$ where
$\tranRel_B=\tranRel\cap(B\times B)$
and $\labels_B:B\rightarrow 2^\AP$ such that $\labels_B(t)=\labels(t)$
for any $t\in B$. In other words, $\Kripke^s_B$ is a sub-model
of $\Kripke$ where
only states in $B$ and transitions between states in $B$ are
kept.
Moreover, let $\UXLTL$ denote the fragment of
$\LTL$ only containing $\U$ and $\X$ modalities, namely, it is defined
by the following grammar:
\begin{align*}
  \phi ::= a \mid \neg a \mid \phi_1\land\phi_2 \mid \phi_1\lor\phi_2 \mid \X\phi \mid \phi_1\U\phi_2.
\end{align*}
Formulas in $\UXLTL$ are also known as co-safety in literature~\cite{DBLP:journals/fmsd/KupfermanV01,DBLP:conf/icra/BhatiaKV10,DBLP:conf/iros/LacerdaPH14}.

Similar as in Section~\ref{sec:fgfg}, we shall show that any fair
formula can be transformed into an equivalent one, where all
$\U$ and $\X$ modalities can be separated from $\F$ and $\G$ such that the
innermost formulas are all in $\UXLTL$. Such a transformation is
syntactical as well, after which a formula in DNF will be obtained and
moreover, each sub-formula can be handled individually by specific and
efficient algorithms.

\begin{theorem}~\label{thm:u}
Let $\phi\in\LTL$ be a fair formula. Then, it can be transformed into the following equivalent formula, referred to also as its fair normal form:
$$\flat(\phi):=  \bigvee_{i=1}^m \left(\phi_{i0}\land\F\G\phi_{i1}\land(\bigwedge_{j=2}^{n_i} \G\F\phi_{ij})\right)$$
where $\phi_{i0}\in\FG$ and $\phi_{ij}\in\UXLTL$ for all $1\le i\le m$
and $1\le j\le n_i$.
\end{theorem}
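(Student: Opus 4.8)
The plan is to reproduce the flattening strategy behind Theorem~\ref{thm:normform-fg}, but now to keep $\U$ and $\X$ alive inside the innermost (co-safety) formulas while still driving every $\F$ and $\G$ to the outside. I would first normalize the input: using $\phi_1\W\phi_2\equiv\G\phi_1\lor(\phi_1\U\phi_2)$ I eliminate $\W$, so that $\phi$ is built only from literals and $\land,\lor,\X,\U,\F,\G$. Since $\phi$ is a fairness, Lemma~\ref{lemma:fair-chara} lets me prepend $\G\F$ without changing its meaning, so it suffices to normalize $\G\F\psi$ and, symmetrically, $\F\G\psi$ for an arbitrary such $\psi$. The normal form is reached once every $\F$ and $\G$ has been pushed so that the immediate argument of each surviving $\F\G$/$\G\F$ wrapper is a $\UXLTL$ formula; converting the result to DNF then gives the displayed shape.

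The core is a rewriting driven by the top operator of the wrapper argument $\psi$. If $\psi$ is already co-safety, I stop. If its top operator is $\F$ or $\G$, I collapse using the idempotence laws of case~\ref{rm-dup}). If it is $\X$, I use the shift-invariance identities $\F\G\X\theta\equiv\F\G\theta$ and $\G\F\X\theta\equiv\G\F\theta$ and recurse on $\theta$. If it is $\land$ or $\lor$, I reuse case~\ref{push-dis}), case~\ref{push-diff}) and case~\ref{push-ant-dis}) (with the CNF/DNF step) exactly as in the $\FG$ setting; crucially, all of those equivalences remain valid for arbitrary, not just propositional, operands, as one checks directly from the semantics. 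The genuinely new case is when the top operator is $\U$, for which I establish $\G\F(\psi_1\U\psi_2)\equiv\G\F\psi_2$ (from $\F(\psi_1\U\psi_2)\equiv\F\psi_2$) and $\F\G(\psi_1\U\psi_2)\equiv\F\G(\psi_1\lor\psi_2)\land\G\F\psi_2$ (from $\G(\psi_1\U\psi_2)\equiv\G(\psi_1\lor\psi_2)\land\G\F\psi_2$ together with $\F(A\land B)\equiv\F A\land B$ whenever $B$ is a fairness, which holds since $B\equiv\G B$ and $B\equiv\F B$). I apply these $\U$-identities \emph{only} when $\psi_1$ or $\psi_2$ still contains an $\F$ or $\G$ that must be extracted, thereby turning the $\U$ into a $\lor$/$\land$ over strictly simpler arguments on which the push continues; a $\U$ (or $\X$) whose operands are already co-safety is left intact, which is precisely what produces the surviving $\UXLTL$ cores.

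After termination each disjunct is a conjunction of $\F\G$-wrapped and $\G\F$-wrapped formulas whose arguments are co-safety (or propositional). Using $\F\G A\land\F\G B\equiv\F\G(A\land B)$ from case~\ref{push-dis}) I merge all $\F\G$-conjuncts of a disjunct into a single $\F\G\phi_{i1}$ with $\phi_{i1}\in\UXLTL$ (a conjunction of co-safety formulas is co-safety); the $\G\F$-conjuncts supply the $\G\F\phi_{ij}$; and the wrappers whose arguments are propositional are themselves fairnesses in $\FG$, so they can be grouped via Theorem~\ref{thm:normform-fg} into the single $\FG$ factor $\phi_{i0}$. A final DNF normalization yields $\flat(\phi)$ of the claimed form, and soundness follows because every rewriting step is an equivalence.

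The main obstacle is the $\U$-step: I must verify its two identities semantically and, more importantly, guarantee termination, since breaking a $\U$ replaces it by a $\lor$ of the same structural size. I would therefore argue by a well-founded lexicographic measure on the wrapper argument, namely the number of $\U$-operators first and the formula size second: the $\U$-break strictly decreases the number of $\U$'s, while the boolean-distribution, $\X$-stripping, and idempotence steps never increase it and strictly decrease the size. The CNF/DNF conversion in case~\ref{push-ant-dis}) duplicates only strictly smaller temporal subformulas, so the induction still closes, at the (already acknowledged) price of an exponential blow-up in the length of $\flat(\phi)$.
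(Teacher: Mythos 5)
Your overall architecture (prepend $\G\F$, push $\F\G$/$\G\F$ inward, leave co-safety cores intact, merge and convert to DNF) matches the paper's, and your two top-level $\U$-identities are correct and are essentially the paper's $\F(\phi_1\U\phi_2)\equiv_c\F\phi_2$ and $\G(\phi_1\U\phi_2)\equiv_c\G(\phi_1\lor\phi_2)\land\F\phi_2$. However, there is a genuine gap: your rewriting is driven entirely by the \emph{top} operator of the wrapper argument, so it has no rule that can reach an $\F$ or $\G$ buried inside a $\U$- or $\X$-subformula that is \emph{not} itself the immediate argument of the wrapper. Take the fairness $\F\G\bigl(d\lor(a\U(b\land\G c))\bigr)$: the top operator is $\lor$, the disjunct $a\U(b\land\G c)$ neither is co-safety nor begins with $\F$ or $\G$, so none of cases 2)--4) fires, and your $\U$-break does not apply because the $\U$ is one level down. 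The paper closes exactly this hole with a separate, bottom-up battery of hoisting equivalences that pull $\F$ and $\G$ out of the scope of $\U$ and $\X$ before the outer flattening starts, e.g.\ $\phi_1\U(\phi_2\land\G\phi_3)\equiv_c(\phi_1\U\phi_2)\land\G\phi_3$, $(\phi_1\lor\G\phi_2)\U\phi_3\equiv_c(\G\phi_2\land\F\phi_3)\lor(\phi_1\U\phi_3)$, $(\phi_1\land\phi_2)\U\phi_3\equiv_c(\phi_1\U\phi_3)\land(\phi_2\U\phi_3)$, $\X(\phi_1\ast\phi_2)\equiv_c\X\phi_1\ast\X\phi_2$ and $\X\G\phi\equiv_c\G\phi$; after that phase every surviving $\U$/$\X$ core is genuinely in $\UXLTL$ and the $\FG$-style push goes through. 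Without these rules your procedure gets stuck on legitimate fair inputs, so the claimed normal form is not reached.

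A secondary problem is your termination measure: the $\F\G$-break rewrites $\F\G(\psi_1\U\psi_2)$ into $\F\G(\psi_1\lor\psi_2)\land\G\F\psi_2$, which \emph{duplicates} $\psi_2$; if $\psi_2$ itself contains a $\U$, the total number of $\U$-operators does not strictly decrease (it is preserved already for $\psi_2=a\U b$), so the lexicographic measure (number of $\U$'s, then size) is not well-founded for this step. A workable measure must track something like the number of $\F$/$\G$ occurrences remaining in the scope of a $\U$ or $\X$, which is also the quantity that the paper's hoisting identities strictly decrease.
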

\begin{example}\label{ex:fairltl}
Take $\phi=\neg(\F\G(a\lor ( \X(b\U c) \land \F \neg b)))$, then the fair normal form
of $\neg\phi$ is $\F\G a \lor (\F\G(a \lor \X(b\U c)) \land \G\F \neg b)$.
\end{example}

As before, the model checking of an $\LTL$ formula $\phi$ is essentially reduced to the
problem of finding a path in $\Kripke$ satisfying $\flat(\neg\phi)$. Thus, we shall
focus on the procedure of finding a path in $\Kripke$ satisfying the
given formula
$\psi=\phi_0\land\F\G\phi_1\land\G\F\phi_2\land\ldots\land\G\F\phi_n$
with $\phi_0\in\FG$ and $\phi_j\in\UXLTL$ for all $1\le j\le n$. Note by Theorem~\ref{thm:u}, $\flat(\neg\phi)$ is a disjunction of such formulas.

We show how to optimize the procedure of finding a path satisfying $\psi$ or not.
Case $\phi_1\equiv\top$: hence the sub-formula
$\F\G\phi_1$ can be omitted from $\psi$. The formal
procedure for checking whether there exists a path in $\Kripke$
satisfying $\psi$ is presented in Algorithm~\ref{alg:fair}.
As $\psi$ is a fair formula, we can easily show that $\phi_0$
must be also a fair formula. Since $\phi_0\in\FG$, a simple
modification of Algorithm~\ref{alg:fgfg} can be applied to find all accepting SCCs
with respect to $\phi_0$ in $\Kripke$ (line~\ref{ln:fair-acc}). If no
accepting SCC exists, we can terminate, as no path in $\Kripke$ can satisfy $\phi$;
Otherwise, for each accepting SCC $B$ and $\phi_j$
with $2\leq j\leq n$, add a fresh atomic proposition $a_j$ to $A$
(line~\ref{ln:fair-label-update}) iff there exists a state $t\in B$
and $\paths\in\pathSet^\omega(\Kripke^t_B)$ such that $\paths\models\phi_j$
(line~\ref{ln:fair-label}). This step can be done by launching
classical algorithms: A
path $\paths\in\pathSet^\omega(\Kripke^t_B)$ exists such that
$\paths\models\phi_j$ iff $\Kripke^t_B$ does not satisfy $\neg\phi_j$. Finally, an
SCC $B$ is accepted by $\psi$ if at least one
state in $B$ is marked by $a_j$ for each $2\le j\le n$, namely,
$A=\{a_j\}_{2\le j\le n}$ (line~\ref{ln:fair-label-coincide}).

\begin{algorithm}[!t]
    \caption{The procedure $\mathtt{accPath}(\psi,\Kripke)$ for
      checking whether there exists $\paths\in\pathSet^\omega(\Kripke)$
      such that $\paths\models\psi$, where
      $\psi=\phi_0\land\G\F\phi_2\land\ldots\land\G\F\phi_n$ with
      $\phi_0\in\FG$ and $\phi_j\in\UXLTL$ for each $2\le j\le n$.
      $\mathtt{accPath}(\psi,\Kripke)$ returns
      $\mathit{True}$ if a path satisfying $\psi$ is found, and $\mathit{False}$ otherwise.}\label{alg:fair}
    \begin{algorithmic}[1]
       \Procedure{$\mathtt{accPath}$}{$\psi,\Kripke$}
       \State $\mathit{Acc} \leftarrow \{\text{all accepting SCCs with
       respect to }\phi_0\}$;\label{ln:fair-acc}
       \ForAll{($B\in\mathit{Acc}$)}\label{ln:fair-acc-begin}
           \State $A\leftarrow\emptyset$;\label{ln:fair-acc-empty}
           \ForAll{($2\le j\le n$ and $t\in B$)}\label{ln:fair-label-begin}
               \If{(not $(\Kripke^t_B\models\neg\phi_j)$)}\label{ln:fair-label}
                   \State $A\leftarrow A\cup\{a_j\}$;\label{ln:fair-label-update}
               \EndIf
           \EndFor
           \If{($A=\{a_j\}_{2\le j\le n}$)}\label{ln:fair-label-coincide}
              \Return $\mathit{True}$;
           \EndIf
       \EndFor
       \State \Return $\mathit{False}$;
   \EndProcedure
    \end{algorithmic}
\end{algorithm}

The key point behind Algorithm~\ref{alg:fair} is that $\phi_j$ ($2\le j\le
n$) is in $\UXLTL$, the corresponding
B\"uchi automaton of which is terminal~\cite{BloemRS99}. Therefore, once a path $\paths$
satisfies $\phi_j$, we can always find a finite fragment of $\paths$
which suffices to conclude that $\paths\models\phi_j$ regardless of the
remainder of $\paths$. In other words, whenever $\paths\models\phi_j$,
there exists $i\ge 0$ such that
$(\prefix{\paths}{i}\concat\paths')\models\phi_j$ for any infinite path
$\paths'$. Whenever Algorithm~\ref{alg:fair} returns $\mathit{True}$
and finds an accepting $B$ for $\psi$, we can construct a path satisfying $\psi$
as follows:
\begin{enumerate}
\item\label{it:a1}
Let $\paths_1$ be a finite path in $\Kripke^t_B$ for any $t$ such that
all states in $B$ appear in $\paths_1$ for at least once. Traversing all states
in $B$ is useful to witness $\phi_0\in\FG$.
\item\label{it:a2} Continue from the last state of $\paths_1$ and go to a
  state $t_2$ by following any path, where $t_2$ is a state in $B$, from which
  a path satisfying $\phi_2$ exists. Let $\paths'_1$ be the resultant
  path ending at $t_2$. Expand $\paths'_1$ by following the path satisfying $\phi_2$ and
  stop whenever $\phi_2$ is for sure satisfied. Denote the resultant finite
  path by $\paths_2$.
\item Keep extending $\paths_2$ by repeating step~\ref{it:a2} for each
  $3\le j\le n$. Let $\paths_n$ denote the resulting path.
\item Let $\paths'_n$ denote an arbitrary extension of $\paths_n$ such that
   $t$ is a direct successor of the last state of
   $\paths'_n$, namely, $(\paths'_n)^\omega$ is a cyclic path
   in $\Kripke^t_B$.
\end{enumerate}

By construction, it is easy to check that
$(\paths'_n)^\omega\models\psi$, which also shows the soundness and
completeness of Algorithm~\ref{alg:fair}.

Case
$\phi_1\not\equiv\top$: we have to make sure that an accepting path
also satisfies $\F\G\phi_1$. For this purpose, we first transform $\F\G\phi_1$ to a
B\"uchi automaton, denoted $\mathcal{A}_1$, and then build a product
model $\Kripke\times\mathcal{A}_1$ as in the classical
algorithm. Let $a_1$ be a fresh atomic proposition such that $a_1$
holds at a state iff the state is accepting in $\Kripke\times\mathcal{A}_1$.
The remainder of the procedure is similar as the case when
$\phi_1\equiv\top$ except $\F\G\phi_1$ is replaced by $\G\F a_1$
in $\psi$ and the model under checked will be $\Kripke\times\mathcal{A}_1$.
\begin{example}\label{ex:fairltlcheck}
Consider to verify $\phi$ from Example~\ref{ex:fairltl} over $\Kripke$ in Example~\ref{ex:kripke}. As we already have the fair normal form for $\neg\phi$ by Example~\ref{ex:fairltl}, we need to check whether there is a path $\paths$ such that $\paths\models \F\G a$ or $\paths\models\F\G(a \lor \X(b\U c)) \land \G\F \neg b)$. Note that if we first check $\F\G a$, then we employ Algorithm~\ref{alg:fgfg} and terminate here with a counterexample $(s_0)^\omega$.

To further illustrate the algorithm, we
 continue with formula $\F\G(a \lor \X(b\U c)) \land \G\F \neg b$. Since $a\lor\X(b\U c)\not\equiv\top$,
we construct an automaton $\mathcal{A}$ for $\F\G(a\lor\X(b \U c))$ with B\"uchi accepting
condition $\G\F accepting$ where $accepting$ is a new atomic proposition. Then we construct the product of $\Kripke$ and $\mathcal{A}$ and find an SCC accepted by $\G\F accepting \land \G\F \neg b$, in this case, say $\{s_0, s_1, s_2\}$. We
therefore construct a counterexample $(s_0 s_1 s_2)^\omega$. Detailed information of $\mathcal{A}$ and the product
can be found in the appendix.
\end{example}
\paragraph{Discussions}
As mentioned before,
formulas in $\UXLTL$ are guarantee properties according to the
classification in~\cite{CernaP03}. Their corresponding B\"uchi
automata are terminal, for which specific and efficient algorithms
exist~\cite{BloemRS99}. By separating a fair formula, we can
identify sub-formulas belonging to different fragments, each of which
will be handled by specific and efficient algorithms.

\subsection{General Formulas with Fairness Assumptions}
\label{sec:fg}
In this subsection we show how the model checking problem for general $\LTL$
formulas with fairness assumptions
can be accelerated by the specific
algorithms for fair formulas introduced in the above subsections.

Given a fair
formula $\phi_f$ and an $\LTL$ formula $\phi$, the model checking
problem of $\phi$ under the assumption $\phi_f$ reduces to
checking whether $\Kripke\models(\phi_f\implies\phi)$. In order to
make use of our specific algorithm for fairness, the
procedure can be divided into two steps:
\begin{enumerate}
\item $\neg\phi$ is first transformed into a
B\"uchi automaton, denoted $\mathcal{A}_{\neg\phi}$, and the product
of $\mathcal{A}_{\neg\phi}$ and $\Kripke$ is then constructed, where all accepting
states are marked by a fresh atomic proposition $\mathit{accepting}$;
\item Then
$\Kripke\models(\phi_f\implies\phi)$ iff there is no path in the product
satisfying $\phi_f\land\G\F\mathit{accepting}$. Note $\phi_f\land\G\F\mathit{accepting}$ is still a
fair formula, for which our efficient algorithm can be applied.
\end{enumerate}
Note that we can specify some fairness assumption like $\F\G(a\lor ( \X(b\U c) \land \F \neg b))$ in Example~\ref{ex:fairltl} which is not in $\FG$. Moreover, by making use of our algorithm for fairness, we
gain some speed up in the model checking procedure if we choose to check $\F\G a$ in the fair normal form as
discussed in Example~\ref{ex:fairltlcheck}.
\subsection{Formula Characterization}
In this section, we specify some formula sets which are favourable to our algorithm
as well as some formula sets for which our syntactic transformation leads to dramatic blow up of
the formula lengths.

We first characterize some formula sets to which applying our transformation does not lead to
dramatic growth of formula length, and we call them the \emph{fast $\LTL$} formulas.
\begin{definition}\label{def:fast1-chara}
  Let $\Sigma_f$ be a subset of $\LTL$ formulas which is constructed by following rules. Then $ \phi_{f}, \phi_{e}\in\Sigma_f$
  where $\phi_{1} \in \UXLTL$.
  \begin{align*}
  \phi_{0} &::= \phi_1 \mid \F\phi_{0} \mid \G \phi_{0} \mid  \phi_{0} \land \phi_{0} \\
   \phi_{f} &::= \phi_{0} \mid \phi_{f} \lor \phi_{f} \\
   \phi_{e} &::= \phi_1 \mid \phi_{e} \lor \phi_{e} \mid \phi_{e} \land \phi_{e} \mid \F \phi_{e} \mid \G \phi_{e} \\
 \end{align*}
\end{definition}
By induction on the structure of formulas defined in Definition~\ref{def:fast1-chara} and
similar analysis from Theorem~\ref{thm:fgfg-complexity}, it is straightforward to show that:
\begin{corollary}\label{coro:linear-growth}
Let $\phi_{f}$($\phi_e$) be a formula defined in Definition~\ref{def:fast1-chara} and $\phi'_{f}$
($\phi'_e$) be the resulting formula after the transformation defined in Theorem~\ref{thm:u}.
Then $\len{\phi'_{f}} = \mathcal{O}(\len{\phi_{f}})$.
Similarly, we have $\len{\phi'_{e}} = \mathcal{O}(2^{\len{\phi_{e}}})$.
\end{corollary}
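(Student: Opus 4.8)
The plan is to prove both bounds by structural induction on the grammar of Definition~\ref{def:fast1-chara}, tracking how much the flatten operation $\flat$ of Theorem~\ref{thm:u} enlarges a formula at each production. The engine of the argument is the four-case analysis behind Theorem~\ref{thm:normform-fg}: cases~\ref{rm-dup}), \ref{push-dis}) and \ref{push-diff}) add only a constant overhead per node, since the collapsing rules $\G\F\G\equiv\F\G$, $\F\G\F\equiv\G\F$, $\ldots$ replace a whole block of modalities by a single $\F\G$ or $\G\F$, and the distributive rules preserve the number of leaves; length is multiplied only when we are forced into case~\ref{push-ant-dis}), i.e. when a $\G\F$ must cross a conjunction (or $\F\G$ a disjunction) and a DNF/CNF conversion is triggered. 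So the whole proof reduces to controlling how often, and how badly, case~\ref{push-ant-dis}) fires.

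For $\phi_f$ the decisive structural fact is that the grammar forbids $\lor$ inside $\phi_0$: disjunctions occur only at the top, through $\phi_f::=\phi_f\lor\phi_f$. The flatten procedure prefixes $\G\F$ and distributes it over these top disjunctions by case~\ref{push-dis}), so it suffices to flatten each $\G\F\phi_0$ in isolation. As $\phi_0$ is built only from $\F$, $\G$, $\land$ and $\UXLTL$-leaves, pushing a modality inward never meets a disjunction: it collapses when crossing $\F$ or $\G$ (case~\ref{rm-dup})), an $\F\G$ distributes across $\land$ (case~\ref{push-dis})), and a $\G\F$ across $\land$ is resolved by case~\ref{push-diff}) whenever a conjunct carries a leading $\F$ or $\G$, while conjuncts that are bare $\UXLTL$-leaves simply merge into one $\UXLTL$-leaf (recall $\UXLTL$ is closed under $\land$). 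Thus case~\ref{push-ant-dis}) never yields a real DNF blow-up, each original leaf is touched a constant number of times, every $\G\F\phi_0$ becomes a single conjunction of modal leaves, and the overall result $\bigvee_i(\cdots)$ is already in disjunctive fair normal form; hence $\len{\phi'_f}=\mathcal{O}(\len{\phi_f})$.

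For $\phi_e$, by contrast, $\land$ and $\lor$ may be nested and interleaved arbitrarily above the $\UXLTL$-leaves, so case~\ref{push-ant-dis}) genuinely fires and DNF/CNF conversion is unavoidable; the target is to recover the single-exponential bound of Theorem~\ref{thm:fgfg-complexity} by mirroring its counting argument with the $\UXLTL$-leaves of $\phi_e$ playing the role of the propositional formulas there. The crucial observation is that the collapsing rules of case~\ref{rm-dup}) keep the number of distinct modal leaves $\F\G\phi_{ij}$, $\G\F\phi_{ij}$ bounded by twice the number of $\UXLTL$-leaves, that is $k=\mathcal{O}(\len{\phi_e})$, and that a DNF/CNF conversion only recombines these atoms without creating new ones. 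Consequently the final disjunctive form has at most $2^{k}$ product terms, each of length $\mathcal{O}(\len{\phi_e})$, giving $\len{\phi'_e}=\mathcal{O}(\len{\phi_e})\cdot 2^{\mathcal{O}(\len{\phi_e})}=\mathcal{O}(2^{\len{\phi_e}})$.

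The step I expect to be the main obstacle is the $\phi_e$ bound: since modalities may be deeply nested, a conversion forced at an inner node feeds its output into the conversion at an outer node, and one must rule out a double-exponential compounding. I would close this by observing that every conversion is taken over the same fixed pool of $\mathcal{O}(\len{\phi_e})$ modal-leaf atoms rather than over already-expanded subformulas: cases~\ref{rm-dup})--\ref{push-diff}) first push all modalities down to these atoms, producing a Boolean combination of size $\mathcal{O}(\len{\phi_e})$, and a single DNF of that combination realises the stated bound. For $\phi_f$ the only point needing care is the dual claim that case~\ref{push-ant-dis}) is never genuinely triggered, which follows from the absence of $\lor$ below the top level together with the closure of $\UXLTL$ under $\land$.
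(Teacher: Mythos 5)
Your proof is correct and follows essentially the same route the paper intends: the paper gives no detailed argument beyond invoking structural induction on Definition~\ref{def:fast1-chara} together with the counting analysis of Theorem~\ref{thm:fgfg-complexity}, and your two key observations --- that the absence of $\lor$ below the top level of $\phi_f$ (plus closure of $\UXLTL$ under $\land$) keeps case~\ref{push-ant-dis}) from ever forcing a real CNF/DNF expansion, and that for $\phi_e$ every conversion ranges over a fixed pool of $\mathcal{O}(\len{\phi_e})$ modal atoms so nested conversions cannot compound --- are exactly the intended content. The only blemish, which the paper itself shares, is the abuse of notation in concluding $\len{\phi_e}\cdot 2^{\mathcal{O}(\len{\phi_e})}=\mathcal{O}(2^{\len{\phi_e}})$.
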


In the following, we give the intuition why the transformation increase the formula length by the following
example.
\begin{example}\label{ex:lenBlowup}
Let
\begin{align*}
 \phi = \psi_1\U\psi_2 = &((\G\F a_1 \land \G\F a_2) \lor \cdots \lor (\G\F a_{p-1}\land\G\F a_{p})) \\
        &\U ((\G\F b_1 \lor \G\F b_2) \land \cdots \land (\G\F b_{q-1}\lor\G\F b_{q}))
\end{align*}
Clearly, $\len{\phi}=\mathcal{O}(p+q)$. We need first get all $\F$ and $\G$ modalities out of
the scope of $\U$. To this end, by rules of $(\phi_1\land\phi_2)\U\phi_3\equiv\phi_1\U\phi_3\land\phi_2\U\phi_3$ and
 $\phi_1\U(\phi_2\lor\phi_3)\equiv\phi_1\U\phi_2\lor\phi_1\U\phi_3$, it requires us
 to transform $\psi_1$ to CNF form and $\psi_2$ to DNF form. After that, we get
 a formula which is of size $\mathcal{O}(2^{\len{\phi}})$.
\end{example}

We remark that our transformation does not work when the formula contains $\W$ modalities,
so we replace $\W$ with $\G$ and $\U$ modalities. As a result,
it may increase the number of modalities after negating a formula. Take $\phi=\F\G(\neg a \lor (\neg b \U \neg c))$ for example, after negating $\phi$, it gives us $\G\F (a \land ((\neg b\land c)\W (b\land c)))$, which is equivalent to $\G\F (a \land (\G (\neg b\land c)\lor ((\neg b\land c)\U (b\land c))))$.
After applying the formula transformation, the resulting formula becomes
$(\F\G (\neg b\land c)\land\G\F a)  \lor \G\F(a\land ((\neg b\land c)\U(b\land c)))$. We notice that the reduction for $\W$ modality
 contributes to the growth of the formula length.
\section{Experiment}
\label{sec:experiment}

In this section we first illustrate briefly how our algorithm is
implemented symbolically in NuSMV and then compare the experiment results
with existing algorithms.
NuSMV is a Symbolic Model Verifier extending the first
BDD-based model checker
SMV~\cite{DBLP:journals/iandc/BurchCMDH92}. Compared to tools based
on explicit representations, NuSMV is able to handle relatively more
complex formulas~\cite{RozierM2007}, which is the main reason for
choosing NuSMV in our experiment.

We implement our algorithm in NuSMV symbolically. The algorithm
first decomposes a given formula syntactically to the specific form
according to Theorems~\ref{thm:normform-fg} and \ref{thm:u} and then uses the \emph{fair cycle detection
algorithm} proposed by Emerson and Lei~\cite{DBLP:conf/lics/EmersonL86}
to find accepting SCCs. For instance, given a fair formula $\phi\in\FG$ such that
$\flat(\phi)=\bigvee_{i=1}^m \left(\F\G l_i \land (\bigwedge_{j=1}^{n_i}
  \G\F l_{i,j})\right)$, the fair cycle detection algorithm can be applied
to determine whether there exists an SCC in $\Kripke$ satisfying $\F\G
l_i \land(\bigwedge^{n_i}_{j=1}\G\F l_{i,j})$ for some $1\le i\le m$.
By doing so, we avoid enumerating all SCCs one by one.

We adopt two well-known and scalable problems as our benchmarks:
dining philosopher problem (PD) and binary semaphore protocol
(BS). Their sizes are summarized in Table~\ref{tab:size}, where
``Size'' refers to the number of reachable states for each model,
PD$x$ denotes the PD model with $x$ philosophers, and similarly for BS$x$.
All experiment results were obtained on a computer with an Intel(R) Core(TM) i7-2600
3.4GHz CPU running Ubuntu 14.04 LTS. We set time and memory limits
to be 2 hours and 3 GB, respectively. The source code and several cases can be downloaded
from

\emph{\url{http://iscasmc.ios.ac.cn/?page_id=984}}
\begin{table}[tp]
\centering
\caption{Number of reachable states}\label{tab:size}
\begin{tabular}{|c|c|c|c|c|c|c|c|}\hline\addtolength{\tabcolsep}{200pt}
Model  &  PD6 & PD9 & PD12 & BS4 & BS8 & BS12 & BS16 \\ \hline
Size  & 566 & 13605 & 324782 & 80 & 2304 & 53248 & 1114110\\
\hline
\end{tabular}
\end{table}

We consider three
categories of formulas.

\subsection{Fair $\FG$ formulas}
 The first category takes formulas often used
in verification tasks.
Specifically, for PD model we consider the following
formula, saying that the first philosopher will eat eventually if
no one will be starved (fairness assumption), namely, whenever a
philosopher is ready, he/she will be able to eat eventually:
$$
\mathit{Spec}_1=\left(\bigwedge_{i=1}^n (\G\F\mathit{ready}_i \implies \G\F\mathit{eat}_i)\right)\implies\F\mathit{eat}_1
$$
For BS model, we consider the following two formulas:
$$
\begin{array}{rcl}
  \mathit{Spec}_2 &=&  \left(\bigwedge\limits_{i=1}^n
(\G\F\mathit{enter}_i \implies \G\F\mathit{critical}_i)\right)\\
&& \implies\F\mathit{critical}_1 \\
\mathit{Spec}_3 &=& \left(\bigwedge\limits_{i=1}^n (\G\F \mathit{enter}_i \implies
\G\F \mathit{critical}_i)\right) \\
&& \implies\left((\neg\mathit{critical}_1\land \neg
  \mathit{critical}_3) \U \mathit{critical}_2\right)
\end{array}
$$
$\mathit{Spec}_2$ denotes a similar specification as
$\mathit{Spec}_1$, while $\mathit{Spec}_3$
requires that the second process entering the critical part before the first and
third processes. Notice that all given fairness assumptions are simple formulas in $\FG$.

In the following we write NuSMV to represent the automata-theoretic
approach implemented in NuSMV. Table~\ref{tab:spec} shows both the
time and memory spent by our algorithm and NuSMV to check formulas in
the first category on PD and BS models, where T-O and M-O denote
``timeout'' and ``out-of-memory'', respectively.

The above assumptions in formulas $\mathit{Spec}_i$ ($i=1,2,3$) are fair $\FG$ formulas.
In this case, our algorithm avoids the product construction entirely. From
Table~\ref{tab:spec}, we can see that our algorithm outperforms NuSMV
in almost all cases. In particular, our algorithm terminates in
seconds for some cases, while NuSMV runs out of time or memory.
\begin{table}[!t]
 \caption{Time (second) and memory usage (MB) for formulas in the
   first category}\label{tab:spec}
\centering
  \scalebox{1}{\begin{tabular}{cccccc}
    \toprule
    \multicolumn{1}{c}{\multirow{2}{*}{Formula}}
    & \multicolumn{1}{c}{\multirow{2}{*}{Model}}
    & \multicolumn{2}{c}{Time (second)}
    & \multicolumn{2}{c}{Memory (MB)} \\
    \cmidrule{3-6}
    & & Ours & NuSMV
    & Ours & NuSMV   \\ \hline

    \multirow{3}{*}{$\mathit{Spec}_1$}
    & PD6 & \textbf{2.65} & 19.49 & \textbf{63.90} & 136.98 \\
    & PD9 & \textbf{1373.41} & T-O & \textbf{113.07} & T-O \\
    & PD12 & T-O & T-O & T-O & T-O \\ \hline
    \multirow{4}{*}{$\mathit{Spec}_2$}
    & BS4 & 0.30 & \textbf{0.15} & \textbf{11.97} & 20.67 \\
    & BS8 & \textbf{0.04} & 172.63 & \textbf{14.01} & 141.04 \\
    & BS12 & \textbf{0.11} & T-O & \textbf{33.85} & T-O \\
    & BS16 & \textbf{1.06} & M-O & \textbf{319.58} & M-O \\ \hline
    \multirow{4}{*}{$\mathit{Spec}_3$}
    & BS4 & \textbf{0.02} & 0.10 & \textbf{12.22} & 18.67 \\
    & BS8 & \textbf{0.03} & 101.25 & \textbf{14.82} & 136.64 \\
    & BS12 & \textbf{0.12} & T-O & \textbf{36.31} & T-O \\
    & BS16 & \textbf{1.14} & M-O & \textbf{337.55} & M-O \\
    \bottomrule
  \end{tabular}}
\end{table}

\begin{table}[!t]
\caption{Formulas in scalable patterns generated by ``genltl''.}
\label{tab:formula-genltl}
\centering
\scalebox{1}{\begin{tabular}{|c|c|c|} \hline
Pattern & \text{genltl} arguments & Formula\\ \hline
p1 & $ \text{--and-fg = $n$} $ & $\land_{i=1}^n \F\G a_i $ \\ \hline
p2 & $ \text{--and-gf = $n$}$ & $\land_{i=1}^n \G\F a_i $  \\ \hline
p3 & $ \text{--gh-r = $n-1$} $ &  $\land_{i=1}^{n-1} (\G\F a_i \lor \F\G a_{i+1}) $  \\ \hline
p4 & $ \text{--ccj-xi = $n$, --or-fg = $n$}$ & $ \lor_{i=1}^n \F\G a_i $\\ \hline
\end{tabular}}
\end{table}

\subsection{Fair Pattern Formulas}
We
consider the second category of fair formulas generated by ``genltl''
-- a tool of Spot library~\cite{Duret04MAS} to generate formulas of
scalable patterns. These patterns and
sample formulas are presented in Table~\ref{tab:formula-genltl}, where
column ``\text{genltl} arguments'' denotes arguments used by ``genltl'' to generate
corresponding formulas and $n$ the
number of philosophers in PD or the number of processes in BS. In
Table~\ref{tab:formula-genltl} and the following formulas, we use
$a_i,b_i,\ldots$ as placeholders which will be replaced by proper
atomic propositions during the experiment. To ease the presentation,
we omit the details here. The time and memory usages of our algorithm
and NuSMV to model check formulas in Table~\ref{tab:formula-genltl}
are presented in Figure~\ref{fig:genltl:1} where
we mark by circles and triangles the running time
and maximal memory consumption respectively. Each circle (triangle)
corresponds to the time (memory) consumption of our algorithm and NuSMV.
The coordinate values of the $y$ axis and $x$ axis are the corresponding
experimental results for
NuSMV and our algorithm respectively.
We fill the marks with
red color when it runs out of time and with blue color for memory out.
For all cases, our algorithm consumes a negligible
amount of time and memory comparing to NuSMV, which runs out of time
and memory in many cases.
All points above the main diagonal indicate that our algorithm is faster or consumes less memory
than NuSMV, which is the case for all large examples. Moreover, we tried Spin~\cite{Holzmann1997MCS} for generating the automata
for formulas in Table~\ref{tab:formula-genltl}, it can not return the answer within 30 minutes for a single formula. We note
that we have run experimental results on more generated pattern formulas and observe very similar results as the one presented here.
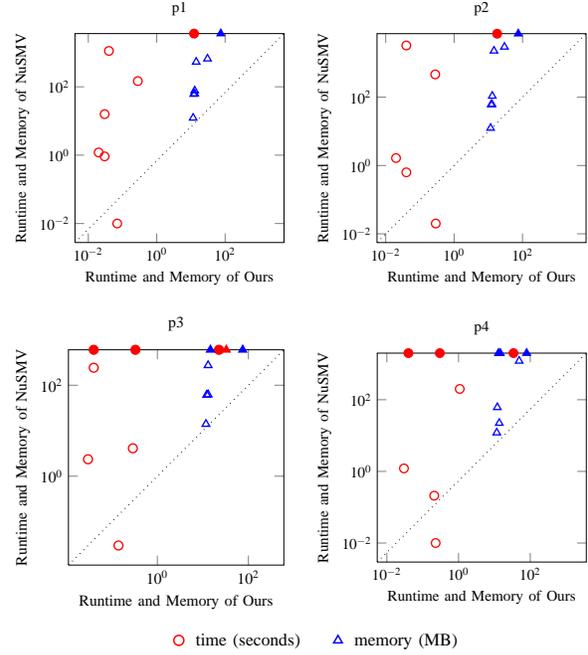
\begin{figure}
  \centering
\resizebox{340pt}{!}{
\begin{tikzpicture}
\scriptsize
\pgfplotsset{
	scatter/classes={
		time={mark=o,solid,semithick,colortime},%
		memo={mark=triangle,solid,semithick,colormem},%
		timeTOut={mark=*,fill=red!20,semithick,colortime},%
		timeMOut={mark=*,fill=blue!20,semithick,colortime},%
		memoMOut={mark=triangle*,fill=colormem,semithick,draw=colormem},
		memoTOut={mark=triangle*,fill=colortime,semithick,draw=colortime}%
	}
	}

	\path[use as bounding box] (-1.9,2.1) rectangle (10.5,-7.1);
	\node (p1pro) at (0,0) {%
	\resizebox{40mm}{!}{
	\begin{tikzpicture}
	\begin{loglogaxis}[
		legend to name=fgltl-plot-legend,
		legend columns=6,
		legend entries={\kern-3mm time (seconds), \kern-3mm memory (MB)},
		scatter,
		title={p1},
		ylabel={Runtime and Memory of NuSMV},
		xlabel={Runtime and Memory of Ours},
		unit vector ratio*=1 1 1,
		unit rescale keep size=true,
		xmax=3600,ymax=3600,
		width=50mm,
		height=50mm
		]
		\addplot[scatter, only marks, scatter src=explicit symbolic] coordinates {
			(0.02, 1.20) [time]
			(0.28, 146.88) [time]
			(12.35, 3600) [timeMOut]
			(0.07, 0.01) [time]
			(0.03, 0.92) [time]
			(0.03, 15.85) [time]
			(0.04, 1119.98) [time]
			
			(12.91, 77.21) [memo]
			(30.53, 661.09) [memo]
			(74.54, 3600) [memoMOut]
			(11.62, 12.29) [memo]
			(12.06, 61.09) [memo]
			(12.86, 63.07) [memo]
			(14.26, 533.50) [memo]
		};
		\draw[dotted] (rel axis cs:0,0) to (rel axis cs:1,1);
	\end{loglogaxis}
	\end{tikzpicture}
	}};
	
	\node (p2pro) at ($(p1pro) + (4.25,0)$) {%
	\resizebox{40mm}{!}{
	\begin{tikzpicture}
	\begin{loglogaxis}[
		scatter,
		title={p2},
		ylabel={Runtime and Memory of NuSMV},
		xlabel={Runtime and Memory of Ours},
		unit vector ratio*=1 1 1,
		unit rescale keep size=true,
		xmax=7200,ymax=7200,
		width=50mm,
		height=50mm
		]
		\addplot[scatter, only marks, scatter src=explicit symbolic] coordinates {
				(0.04, 0.63) [time]
				(0.28, 463.51) [time]
				(18.08, 7200) [timeMOut]
				(0.29, 0.02) [time]
				(0.02, 1.65) [time]
				(0.04, 3255.54) [time]
				
				(12.80, 61.61) [memo]
				(29.59, 2960.68) [memo]
				(74.96, 7200) [memoMOut]
				(11.64, 12.45) [memo]
				(12.11, 60.93) [memo]
				(13.04, 108.27) [memo]
				(14.63, 2268.04) [memo]
		};
		\draw[dotted] (rel axis cs:0,0) to (rel axis cs:1,1);
	\end{loglogaxis}
	\end{tikzpicture}
	}};
	
	\node (p3pro) at ($(p1pro) + (0,-4.5)$) {%
	\resizebox{40mm}{!}{
	\begin{tikzpicture}
	\begin{loglogaxis}[
		scatter,
		title={p3},
		ylabel={Runtime and Memory of NuSMV},
		xlabel={Runtime and Memory of Ours},
		unit vector ratio*=1 1 1,
		unit rescale keep size=true,
		xmax=600,ymax=600,
		width=50mm,
		height=50mm
		]
		\addplot[scatter, only marks, scatter src=explicit symbolic] coordinates {
		     (0.29, 4.11) [time]
		     (0.33, 600) [timeTOut]
		     (22.55, 600) [timeMOut]
		     (0.14, 0.03) [time]
		     (0.03, 2.36) [time]
		     (0.04, 242.61) [time]
		     (0.04, 600) [timeMOut]
		
		     (13.02, 62.15) [memo]
		     (32.60, 600) [memoTOut]
		     (74.89, 600) [memoMOut]
		     (11.68, 13.98) [memo]
		     (12.15, 61.41) [memo]
		     (13.02, 273.48) [memo]
		     (14.64, 600) [memoMOut]
		};
		\draw[dotted] (rel axis cs:0,0) to (rel axis cs:1,1);
	\end{loglogaxis}
	\end{tikzpicture}
	}};

	\node (p4pro) at ($(p2pro) + (0,-4.5)$) {%
	\resizebox{40mm}{!}{
	\begin{tikzpicture}
	\begin{loglogaxis}[
		scatter,
		title={p4},
		ylabel={Runtime and Memory of NuSMV},
		xlabel={Runtime and Memory of Ours},
		unit vector ratio*=1 1 1,
		unit rescale keep size=true,
		xmax=2000,ymax=2000,
		width=50mm,
		height=50mm
		]
		\addplot[scatter, only marks, scatter src=explicit symbolic] coordinates {
			(0.21, 0.21) [time]
			(1.08, 200.03) [time]
			(34.11, 2000) [timeMOut]
			(0.23, 0.01) [time]
			(0.03, 1.21) [time]
			(0.04, 2000) [timeMOut]
			(0.30, 2000) [timeMOut]
			
			(13.68, 22.29) [memo]
			(49.07, 1209.45) [memo]
			(79.34, 2000) [memoMOut]
			(11.64, 12.03) [memo]
			(12.07, 61.21) [memo]
			(12.97, 2000) [memoMOut]
			(14.58, 2000) [memoMOut]
		};
		\draw[dotted] (rel axis cs:0,0) to (rel axis cs:1,1);
	\end{loglogaxis}
	\end{tikzpicture}
	}};
    \draw ($(p3pro) + (2.4,-2.5)$) node{\ref{fgltl-plot-legend}};
\end{tikzpicture}
}
  \caption{Comparison With NuSMV for Generated Formulas}
  \label{fig:genltl:1}
\end{figure}

We remark that all formulas in Table~\ref{tab:formula-genltl} are
simple formulas, actually a subset of $\FG$, which can be converted to
simple Streett/Rabin fairness conditions. We expect some speedup if
optimisations~\cite{Baier:2008:PMC:1373322} for treating simple
fairness are implemented in NuSMV. Our algorithm for fairness in $\FG$
follows the same idea except that we first conduct a formula
transformation so that we can handle fairness like $\G\F(a\land \G
b)$. More importantly, our treatment of fair $\FG$ formulas is also an
essential preparation step of handling general LTL fair formulas, as considered below.

\begin{table*}[tbh]
\caption{Formula patterns used in our experiment}
\label{tab:formula-cav}
\centering
\scalebox{1}{\begin{tabular}{|c|c|} \hline
Pattern &  Formula\\ \hline
p5 & $\lor_{1\leq i\leq n} ((\F\G a_i \lor \G\F b_i) \land (\F\G c_i \lor \G\F d_i))$ \\ \hline
p6 & $\land_{1\leq i\leq n} ((\F\G a_i \lor \G\F b_i) \land (\F\G c_i
\lor \G\F d_i))$ \\ \hline
p7 & $\land_{1\leq i\leq n}((\G\F (a_i \land \X\X b_i ) \lor \F\G b_i) \land
\F\G (c_i \lor (\X d_i \land \X\X b_i)))$ \\ \hline
p8 & $\lor_{1\leq i\leq n}((\G\F (a_i \land \X\X b_i ) \lor \F\G b_i) \land
\F\G (c_i \lor (\X d_i \land \X\X b_i)))$\\ \hline
p9 & $\land_{1\leq i\leq n}(\F\G (a_i \lor c_i \lor (a_i \U b_i) \lor (c_i
\U d_i))$ \\ \hline
p10 & $\lor_{1\leq i\leq n}(\F\G (a_i \lor c_i \lor (a_i \U b_i) \lor (c_i
\U d_i))$ \\ \hline
p11& $\lor_{1\leq i\leq n} (\F\G (a_i \lor (a_i\U b_i)) \lor \G\F (c_i \land (c_i \U d_i))) $ \\ \hline
p12 & $\land_{1\leq i\leq n} (\F\G (a_i \lor (a_i\U b_i)) \lor \G\F (c_i \land (c_i \U d_i)))$ \\ \hline
p13 & $\land_{1\leq i\leq n} (\F\G((a_i \land \X\X b_i \land \G\F b_i)\U(\G(\X\X \neg c_i \lor \X\X(a_i \land b_i)))))$ \\ \hline
p14 & $\land_{1\leq i\leq n} (\G(\F\neg a_i \land \F(b_i \land \X \neg c_i) \land \G\F(a_i \U d_i)) \land \G\F((\X d_i)\U(b_i \lor \G c_i))) $ \\ \hline
p15 & negations of formulas in p13 \\ \hline
p16 & negations of formulas in p14 \\ \hline
\end{tabular}}
\end{table*}

\setlength{\textfloatsep}{12pt}
\begin{table*}[t]
 \caption{Time Usage (second)}\label{tab:time-cav}
\centering
\scalebox{0.9}{\begin{tabular}{ccccccccccccccc}
    \toprule
    \multicolumn{1}{c}{\multirow{2}{*}{model}}
    & \multicolumn{2}{c}{p5}
    & \multicolumn{2}{c}{p6}
    & \multicolumn{2}{c}{p7}
    & \multicolumn{2}{c}{p8}
    & \multicolumn{2}{c}{p9}
    & \multicolumn{2}{c}{p10}\\
    \cmidrule{2-13}

    & Ours & NuSMV
    & Ours & NuSMV
    & Ours & NuSMV & Ours & NuSMV
    & Ours & NuSMV
    & Ours & NuSMV   \\ \hline
    PD6    & \textbf{0.11} & T-O & \textbf{0.17} & T-O   & \textbf{0.27} & 2216.90 & \textbf{43.65} & T-O  & \textbf{0.49} & 3.88 & 4353.34 & \textbf{3.90}\\
    PD9   &  \textbf{1.09} & M-O & \textbf{0.38} & M-O    & \textbf{1.74} & M-O & T-O & T-O  & \textbf{18.54} & M-O & T-O & M-O\\
    PD12  & \textbf{41.45} & M-O & \textbf{12.79} & M-O   & \textbf{129.50}
    & M-O & T-O & M-O  & \textbf{1344.55} & M-O  & M-O & M-O \\
\hline
    BS4   & \textbf{0.07} & 10.96 & \textbf{0.23} & 26.67    & \textbf{0.08} & 116.65 & \textbf{0.64} & 14.61 & \textbf{0.06} & 0.07  & 6.49 & \textbf{0.06}\\
    BS8   & \textbf{0.14} & M-O & \textbf{0.02} & M-O   &\textbf{0.03} & T-O & \textbf{1072.92} & T-O &  \textbf{0.08} & 5.36  & T-O & \textbf{3.34}\\
    BS12  & \textbf{4.55} & M-O & \textbf{0.25} & M-O   & \textbf{0.05} & M-O & T-O & M-O  & \textbf{0.32} & 2519.18  & M-O & \textbf{243.44} \\
    BS16  & \textbf{377.33} & M-O & \textbf{1.07} & M-O   & \textbf{0.41} & M-O & T-O & M-O & \textbf{0.94} & T-O   & M-O & M-O  \\
    \bottomrule
  \end{tabular}}
\end{table*}

\begin{table*}[tp]
 \caption{Memory Usage (MB) }\label{tab:mem-cav}
\centering
  \scalebox{0.9}{\begin{tabular}{ccccccccccccc}
    \toprule
    \multicolumn{1}{c}{\multirow{2}{*}{model}}
    & \multicolumn{2}{c}{p5}
    & \multicolumn{2}{c}{p6}
    & \multicolumn{2}{c}{p7}
    & \multicolumn{2}{c}{p8}
    & \multicolumn{2}{c}{p9}
    & \multicolumn{2}{c}{p10}\\
    \cmidrule{2-13}

    & Ours & NuSMV
    & Ours & NuSMV
    & Ours & NuSMV  & Ours & NuSMV
    & Ours & NuSMV
    & Ours & NuSMV \\ \hline
    PD6   & \textbf{14.10} & T-O & \textbf{13.26} & T-O    & \textbf{14.63} & 507.14 & \textbf{90.35} & T-O & \textbf{40.30} & 95.27  & 1877.45 & \textbf{159.64} \\
    PD9   & \textbf{55.77} & M-O & \textbf{35.61} & M-O   & \textbf{48.17} & M-O & T-O & T-O & \textbf{73.23} & M-O & T-O & M-O\\
    PD12  & \textbf{98.70} & M-O & \textbf{76.07} & M-O   & \textbf{92.81} & M-O & T-O & M-O & \textbf{141.55} & M-O & M-O & M-O  \\
\hline
    BS4   & \textbf{11.77} & 62.47 & \textbf{11.70} & 63.75  & \textbf{12.01} & 61.32 & \textbf{19.64} & 61.70 & \textbf{13.37} & 20.36 & 89.45 & \textbf{19.57}\\
    BS8   & \textbf{16.44} & M-O & \textbf{12.16} & M-O   & \textbf{12.68} & T-O & \textbf{508.83} & T-O & \textbf{17.21} & 62.23 & T-O & \textbf{61.89}\\
    BS12  & \textbf{74.95} & M-O & \textbf{12.94} & M-O   & \textbf{14.28} & M-O & T-O & M-O  & \textbf{40.95} & 683.38 & M-O & \textbf{394.22}\\
    BS16  & \textbf{423.19} & M-O & \textbf{14.39} & M-O  & \textbf{16.84} & M-O & T-O & M-O & \textbf{63.17} & T-O  & M-O & M-O \\
    \bottomrule
  \end{tabular}}
\end{table*}

\begin{table*}[t]
 \caption{Time Usage (second)}\label{tab:time-cav-u}
\centering
\scalebox{0.9}{\begin{tabular}{ccccccccccccc}
    \toprule
    \multicolumn{1}{c}{\multirow{2}{*}{model}}
    & \multicolumn{2}{c}{p11}
    & \multicolumn{2}{c}{p12}
    & \multicolumn{2}{c}{p13}
    & \multicolumn{2}{c}{p14}
    & \multicolumn{2}{c}{p15}
    & \multicolumn{2}{c}{p16} \\
    \cmidrule{2-13}
    & Ours & NuSMV
    & Ours & NuSMV
    & Ours & NuSMV
    & Ours & NuSMV
    & Ours & NuSMV
    & Ours & NuSMV   \\ \hline
    PD6   & \textbf{4.48} & 120.16 & \textbf{0.26} & 1793.13 & \textbf{0.32} & 122.76 & \textbf{0.07} & T-O  & 125.82 & \textbf{57.35} & \textbf{3.96} & M-O\\
    PD9   & M-O & M-O & \textbf{1.62} & T-O & \textbf{0.95} & M-O & \textbf{0.29} & M-O  & T-O & T-O & \textbf{2912.29} & M-O\\
    PD12  & M-O & M-O  & \textbf{77.01} & M-O & \textbf{72.20} & M-O & \textbf{10.95} & M-O & T-O & M-O  & T-O & M-O\\
\hline
    BS4   & 15.30 & \textbf{0.26}  & \textbf{0.50} & 3.49 & \textbf{0.14} & 13.13  & \textbf{0.01} & 313.46  & \textbf{0.46} & 10.14  & \textbf{0.32} & 218.21\\
    BS8   & T-O & \textbf{280.91}  & \textbf{0.03} & T-O &  \textbf{0.03} & T-O  & \textbf{0.01} & M-O     & \textbf{742.02} & T-O  & \textbf{0.26} & M-O\\
    BS12  & M-O & M-O  & \textbf{0.05} & M-O & \textbf{0.05} & T-O   & \textbf{0.34} & M-O     & T-O & M-O     & \textbf{1.80} & M-O\\
    BS16  & M-O & M-O & \textbf{1.00} & M-O & \textbf{0.07} & M-O   & \textbf{0.38} & M-O     & M-O & M-O     & \textbf{13.16} & M-O\\
    \bottomrule
  \end{tabular}}
\end{table*}

\begin{table*}[t]
 \caption{Memory Usage (MB) }\label{tab:mem-cav-u}
\centering
  \scalebox{0.9}{\begin{tabular}{ccccccccccccc}
    \toprule
    \multicolumn{1}{c}{\multirow{2}{*}{model}}
    & \multicolumn{2}{c}{p11}
    & \multicolumn{2}{c}{p12}
    & \multicolumn{2}{c}{p13}
    & \multicolumn{2}{c}{p14}
    & \multicolumn{2}{c}{p15}
    & \multicolumn{2}{c}{p16}\\
    \cmidrule{2-13}
    & Ours & NuSMV
    & Ours & NuSMV
    & Ours & NuSMV
    & Ours & NuSMV
    & Ours & NuSMV
    & Ours & NuSMV \\ \hline
    PD6   & \textbf{139.28} & 261.5  & \textbf{14.72} & 457.01 & \textbf{13.76} & 90.76  & \textbf{12.85} & T-O & 101.20 & \textbf{93.16}  & \textbf{74.48} & M-O\\
    PD9   & M-O & M-O & \textbf{50.61} & T-O & \textbf{36.09} & M-O & \textbf{29.83} & M-O  & T-O & T-O & \textbf{440.11} & M-O\\
    PD12  & M-O & M-O & \textbf{90.92} & M-O & \textbf{89.87} & M-O & \textbf{79.64} & M-O & T-O & M-O & T-O & M-O\\
\hline
    BS4   & 95.83 & \textbf{40.36} & \textbf{12.04} & 60.96 & \textbf{12.08} & 51.46 & \textbf{11.68} & 465.63 & \textbf{20.53} & 50.18 & \textbf{14.07} & 460.19\\
    BS8   & T-O & \textbf{143.0} & \textbf{12.71} & T-O & \textbf{12.60} & T-O   & \textbf{12.19} & M-O    & \textbf{214.81} & T-O  & \textbf{27.05} & M-O\\
    BS12  & M-O & M-O & \textbf{14.07} & M-O & \textbf{13.70} & T-O   & \textbf{13.03} & M-O    & T-O & M-O     & \textbf{111.46} & M-O\\
    BS16  & M-O & M-O  & \textbf{16.86} & M-O & \textbf{15.42} & M-O   & \textbf{14.50} & M-O    & M-O & M-O     & \textbf{841.90} & M-O\\
    \bottomrule
  \end{tabular}}
\end{table*}
\subsection{General LTL Fairness}
We consider some general fair LTL formulas, summarized in
Table~\ref{tab:formula-cav}. These formulas are often adopted
to evaluate performance of an $\LTL$ model checker or
planner in the literature; see for
instance~\cite{DBLP:conf/cav/SomenziB00,DBLP:conf/concur/EtessamiH00,DBLP:conf/spin/Pelanek07,DBLP:conf/cav/EsparzaK14}.
The time consumption for checking these formulas is
presented in Table~\ref{tab:time-cav} and \ref{tab:time-cav-u}, while
the memory consumption is shown in Table~\ref{tab:mem-cav} and \ref{tab:mem-cav-u}.
From these results we observe similar phenomena as before for most cases
except for ``p10", ``p11", and ``p15", where our algorithm uses more time
and/or memory than NuSMV for certain cases, particularly when ``p10" and PD models are
concerned. We explain such performance differences in details in the following.

As mentioned before, our algorithm relies on syntactical transformations in
Theorems~\ref{thm:normform-fg} and \ref{thm:u}. These transformations can decompose a fair
formula into smaller sub-formulas, whose corresponding B\"uchi
automata are usually much smaller than the automaton of the original
formula. This is the main reason that our algorithm achieves much
better performance than the classical algorithm for most of the instances. However, the syntactic transformations adopted in
Theorem~\ref{thm:normform-fg} and \ref{thm:u} may cause exponential
blow-up for certain cases; for instance formulas whose negations are in
form of ``p10" and ``p11". In order to push all $\F$ and $\G$
modalities in front of  $\U$ modality, our transformation may
need to transform back and forth between CNF and DNF of some formulas,
especially for those formulas where $\F,\G$ and $\U$ are alternatively
nested for many times. Therefore, for such formulas, the syntactic
transformation may be time-consuming and result in formulas of
exponentially longer than the original ones.

We note that many formulas we take from the literature are
characterized by Definition~\ref{def:fast1-chara}, and
transforming the negation of these formulas only leads to a
linear increase in the formula length.
The exceptions are ``p5", ``p8", ``p10", ``p11", and ``p13-p16". It is
worthwhile to mention that even though for formulas such that the
transformations result in
formulas of exponential length, our algorithm is not necessarily
slower than NuSMV, as the corresponding B\"uchi automata may be
exponentially large as well; for instance ``p8" and ``p10". Finally, our
algorithm outperforms NuSMV for ``p14" and its negation ``p16"; it is faster for ``p13" and is only
slightly slower than  NuSMV for its negation ``p15" for one case.

\section{Conclusion}
\label{sec:concl-future-work}
We presented a novel model checking algorithm for formulas in $\LTL$ with fairness assumptions.
Our algorithm does not follow the automata-theoretic approach completely but
tries to decompose a fair formula into several sub-formulas, each of which
can be handled by specific and efficient algorithms. We showed by
experiment that our algorithm in many cases exceeds NuSMV up to several
orders of magnitudes.

\bibliographystyle{abbrv}
\bibliography{bib}

\newpage
\newpage
\appendix
In the appendix we provide correctness proofs for our transformation. While the intuitive idea is straightforward, the proofs are quite technical due to the many cases.
As a preparation step, we   define first another semantic
equivalence between $\LTL$ formulas with restricted to cyclic
sequences as follows:
\begin{definition}\label{def:equiv-c}
Given two formulas $\phi,\psi$, we write $\phi~\equiv_c~\psi$
iff for any $\sigma\in(2^\AP)^*$, $\sigma^\omega\models\phi$ iff $\sigma^\omega\models\psi$.
\end{definition}
We call equivalence relation $\equiv_c$ \emph{cyclic equivalence} since we only care about
the cyclic words.
It is easy to show that $\equiv$ and $\equiv_c$ make no difference for fairness according to
Corollary~\ref{coro:suffixpath}. However, $\equiv_c$ is
still necessary, as along the transformation some non-fair formulas may be generated,  which may only preserve $\equiv_c$ but not $\equiv$.
For instance, $\F\G\phi\equiv_c\G\phi$ holds while $\F\G\phi\equiv\G$ does not
in general.

\subsection{Proof of equations to explain Theorem~\ref{thm:normform-fg}}
\label{proof:equation}
We first give the proof of the equations listed to explain the intuition behind Theorem~\ref{thm:normform-fg}. In the following, we only prove the first equation, the second equation immediately follows
by negating both sides of the first equation.
\begin{enumerate}
\item $\G\F(\phi_1\lor\phi_2)\equiv \G\F\phi_1\lor\G\F\phi_2$ and $\F\G(\phi_1\land\phi_2)\equiv\F\G\phi_1\land\F\G\phi_2$. The direction from $\G\F\phi_1\lor\G\F\phi_2$
to $\G\F(\phi_1\lor\phi_2)$ is trivial. Consider the other direction, for any infinite word
$\sigma$ such that $\sigma\models\G\F(\phi_1\lor\phi_2)$, there exists infinite $j\geq 0$
such that $\suffix{\sigma}{j}\models \phi_1\lor\phi_2$, which implies at least one formula out of
$\phi_1$ and $\phi_2$ will be satisfied infinitely often.

\item $\G\F(\phi_1\land\F \phi_2)\equiv\G\F\phi_1\land\G\F\phi_2$ and
$\F\G(\phi_1\lor\G\phi_2)\equiv\F\G\phi_1\lor\F\G\phi_2$. Obviously, $\sigma\models \G\F (\phi_1 \land \F \phi_2)$ implies
  $\sigma\models\G\F \phi_1 \land \G\F \phi_2$ for any infinite word $\sigma$. For the other direction,
  $\sigma\models\G\F \phi_1 \land \G\F \phi_2$ implies that there
  exist infinitely many $i$ with $i\geq 0$ such that $\suffix{\sigma}{i}\models\phi_1$.
  For every such $i$ from above, since $\sigma\models\G\F\phi_2$, we have $\suffix{\sigma}{i}\models\phi_1\land\F\phi_2$, hence
  $\sigma \models\G\F(\phi_1\land\F\phi_2)$.
\item $\G\F(\phi_1\land\G\phi_2)\equiv\G\F\phi_1\land\F\G\phi_2$ and
$\F\G(\phi_1\lor\F\phi_2)\equiv\F\G\phi_1\lor\G\F\phi_2$.
The direction that $\G\F (\phi_1 \land \G \phi_2)$ implies
  $\G\F\phi_1 \land \F\G \phi_2$ is straightforward. Now let
  $\sigma\models\G\F\phi_1\land\F\G\phi_2$. Obviously,
  $\sigma\models\F\G\phi_2$, which indicates that there exists some $j\geq 0$ such
  that for every $i\geq j$, we have $\suffix{\sigma}{i}\models\G\phi_2$. In addition,
  since $\sigma\models\G\F\phi_1$, we can find infinitely many $k\geq j$, such that
  $\suffix{\sigma}{k}\models \phi_1\land\G\phi_2$.
Therefore, $\sigma\models\G\F(\phi_1\land\G\phi_2)$.
\end{enumerate}
\subsection{Proof of Theorem~\ref{thm:normform-fg}}
\begin{proof}

In the normal form, there are two kinds of modalities, namely $\F\G$
 and $\G\F$. When we consider fairness, equivalence relation $\equiv$ and $\equiv_c$ coincide,
 we therefore use $\G$ and $\F$ to represent $\F\G$ and $\G\F$ respectively in the following
 flatten operation. We give the rules for flatten operation which we use to transfer any formula in $\FG$ to a formula of the normal form. We then prove
that all transformation rules are sound. Formally, we define the flatten operator $\flat$ inductively as follows:
  \begin{enumerate}
  \item\label{flat:literal} $\flat(l) = l$,
  \item\label{flat:disj} $\flat(\phi_1\lor\phi_2) = \flat(\phi_1)\lor\flat(\phi_2)$,
  \item\label{flat:conj} $\flat(\phi_1\land\phi_2)=\flat_{\mathit{dnf}}(\flat(\phi_1)\land \flat(\phi_2))$,
  \item\label{flat:f} $\flat(\F\phi)=\flat_\F(\flat(\phi))$,
  \item\label{flat:g} $\flat(\G\phi)=\flat_{\mathit{dnf}}(\flat_{\G}(\flat_{\mathit{cnf}}(\flat(\phi))))$,
  \end{enumerate}
  where $\flat_{\mathit{dnf}}$ and $\flat_{\mathit{cnf}}$ denote
  transformations to equivalent formulas in disjunctive norm form
  (DNF) and conjunctive norm form (CNF), respectively, and
 $$
 \begin{array}{l}
   \flat_\F(l)=\F l, \flat_\F(\F
   l)=\F l, \flat_\F(\G l)=\G l,\\
   \flat_\G(l)=\G l, \flat_\G(\F
   l)=\F l, \flat_\G(\G l)=\G l,\\
      \flat_\F(\phi_1 * \phi_2)=\flat_\F(\phi_1) *
   \flat_\F(\phi_2)\text{ with } *\in\{\land,\lor\},\\
   \flat_\G(\phi_1 * \phi_2)=\flat_\G(\phi_1) * \flat_\G(\phi_2) \text{ with } *\in\{\land,\lor\}.\\
 \end{array}
 $$
Note that if $\phi_1*\phi_2$ is a propositional formula, we consider it as one formula so that
we do not apply $\flat_\F$ or $\flat_\G$ to $\phi_1$ and $\phi_2$ individually.
As a result, we obtain a formula in form of $\bigvee_{i=1}^m \left(\G l_i \land
(\bigwedge_{j=1}^{n_i} \F l_{i,j})\right)$. Since $\equiv_c$ and $\equiv$ coincide
for fairness, we actually get an equivalent formula in the form of
$\bigvee_{i=1}^m \left(\F\G l_i \land  (\bigwedge_{j=1}^{n_i} \G\F l_{i,j})\right)$.
We notice that rule~\ref{flat:g} is more involved than other rules.
Intuitively, after applying $\flat$ operator to $\phi$,
it gives us a formula, say $\phi'$, which
is in DNF. Further, since $\G$ is not distributive over $\lor$ operator, we have to use
$\flat_{\mathit{cnf}}$ to transform $\phi'$ to a formula in CNF, say $\phi''$.
By applying operator $\flat_\G$ to $\phi''$, we are able to push $\G$ inside and
then get a formula in DNF through $\flat_{\mathit{dnf}}$.

Next, we shall prove that rules~\ref{flat:literal} to \ref{flat:g} are
sound. For this, it suffices to prove the following rules, where
$\phi_1, \phi_2,$ and $\phi$ are arbitrary $\LTL$ formulas.
  \begin{enumerate}
  \item\label{cyc-eq} $\F\G\phi \equiv_c \G\phi$, $\G\F\phi\equiv_c\F\phi$. Trivial that $\G\phi$ implies $\F\G\phi$. We show that
    whenever $\sigma^\omega\models\F\G\phi$, it is also the
    case that $\sigma^\omega\models\G\phi$. This is also straightforward,
    as $\sigma^\omega\models\F\G\phi$ indicates any suffix of
    $\sigma^\omega$ satisfies $\phi$. By negating both sides of the first equation, the second equation follows immediately.
  \item\label{cyc-fandf} $\F (\phi_1 \land \F \phi_2) \equiv_c \F \phi_1 \land \F \phi_2$,
        $\G (\phi_1 \lor \G \phi_2) \equiv_c \G \phi_1 \lor \G
        \phi_2$. Above equations can be proved by the equations in subsection~\ref{proof:equation}
        and together with $\F\G\equiv_c\G$ and $\G\F\equiv_c\F$.
  \item\label{cyc-fandg} $\F (\phi_1 \land \G \phi_2) \equiv_c \F \phi_1 \land \F\G \phi_2$,
        $\G (\phi_1 \lor \F \phi_2) \equiv_c \G \phi_1 \lor \G\F \phi_2$. Those equations can be proved by the equations in subsection~\ref{proof:equation}
        and together with $\F\G\equiv_c\G$ and $\G\F\equiv_c\F$.
 \item $\F\G \phi_1 \land \F\G\phi_2 \equiv \F\G
   (\phi_1\land\phi_2)$. It has been proved before.
  \end{enumerate}
This completes the proof.
\end{proof}
Intuitively, the flatten operator $\flat$ takes a fair formula $\phi\in\FG$ as an input and
outputs a formula in DNF, where each sub-formula is a conjunction of formulas
in form of $l$, $\F l$ or $\G l$.
 We illustrate the definition of $\flat$ operator via an
example as follows:
\begin{example}\label{ex:flatandfg}
Let $\phi=\F\G(a \lor \F b)$. We show how to flatten $\phi$ step by step,
 where numbers above $=$ denote the corresponding rules in the above proof.
  $$
  \begin{array}{rcl}
   \flat(\F b) \stackrel{\ref{flat:literal},\ref{flat:f}}{=}\F b &\qquad&
   \flat(a\lor\F b) \stackrel{\ref{flat:disj}}{=} a\lor\F b \\
   \flat(\G(a\lor\F b))&\stackrel{\ref{flat:g}}{=}& \G a\lor\F b \\
   \flat(\F\G(a\lor\F b))&\stackrel{\ref{flat:f}}{=}& \G a\lor\F b
  \end{array}
  $$

Intuitively, it means whenever $\paths\models\phi$, it must be the
case that $\paths$ ends up with a loop such that either all states on
the loop satisfy $a$ or at least one state satisfies $b$.
This can be verified by applying the semantics of $\LTL$.

\end{example}
\subsection{Proof of Theorem~\ref{thm:u}}

\begin{proof}
Let $\phi_i$ with $1\le i\le 4$ be any $\LTL$ formula. We have the
following equivalence relations with $*\in\{\land,\lor\}$.
$$
\begin{array}{rcl}
  \F\F\phi&\equiv_c&\F\phi \\
 \F(\phi_1\U\phi_2)&\equiv_c&\F\phi_2\\
  \X(\phi_1 * \phi_2) & \equiv_c & \X\phi_1 * \X\phi_2\\
  \F(\phi_1\lor\phi_2)&\equiv_c&\F\phi_1\lor\F\phi_2 \\
   \F (\phi_1 \land \F \phi_2) & \equiv_c & \F \phi_1 \land \F
     \phi_2\\
  \F (\phi_1 \land \G \phi_2) & \equiv_c & \F\phi_1 \land \G\phi_2\\
   \phi_1 \U (\phi_2\lor\phi_3) & \equiv_c & (\phi_1 \U \phi_2)\lor (\phi_1 \U \phi_3)\\
  (\phi_1 \land\phi_2) \U \phi_3 & \equiv_c & (\phi_1 \U \phi_3)\land
  (\phi_2 \U \phi_3) \\
  \phi_1\U(\phi_2 * \F\phi_3) & \equiv_c &(\phi_1\U\phi_2) * \F\phi_3\\
  \phi_1\U(\phi_2 * \G\phi_3) & \equiv_c & (\phi_1\U\phi_2) * \G\phi_3
\end{array}
$$

$$
\begin{array}{rcl}
  \G\G\phi & \equiv_c &\G\phi\\
 \G(\phi_1\U\phi_2) & \equiv_c
 &\G(\phi_1\lor\phi_2)\land\F\phi_2\\
 \X\F\phi
  \equiv_c \F\phi & \qquad & \X\G\phi\equiv_c\G\phi\\
    \G(\phi_1\land\phi_2)&\equiv_c&\G\phi_1\land\G\phi_2\\
  \G (\phi_1 \lor \G \phi_2)& \equiv_c& \G \phi_1 \lor \G
     \phi_2\\
  \G (\phi_1 \lor \F \phi_2) & \equiv_c & \G\phi_1 \lor \F\phi_2\\
  \qquad(\phi_1 \lor
    \G\phi_2)\U\phi_3 & \equiv_c &(\G\phi_2\land\F\phi_3)\lor(\phi_1\U\phi_3)\\
 (\phi_1 \lor \F\phi_2)\U\phi_3 & \equiv_c & (\F\phi_2\land\F\phi_3)
  \lor (\phi_1\U\phi_3)\\
 (\phi_1 \land \F\phi_2)\U\phi_3 & \equiv_c & (\F\phi_2 \land
  (\phi_1\U\phi_3)) \lor \phi_3 \\
(\phi_1 \land \G\phi_2)\U\phi_3 &
  \equiv_c &(\G\phi_2\land (\phi_1\U\phi_3))\lor\phi_3
\end{array}
$$
Since $\F(\phi_1\lor\phi_2)\equiv\F\phi_1\lor\F\phi_2$ and
$\G(\phi_1\land\phi_2)\equiv\G\phi_1\land\G\phi_2$, together with the
distributive laws of $\land$ and $\lor$, we can
complete the proof. We only show the proofs of the following cases and
omit others which are either similar or trivial.
\begin{itemize}
\item $\F(\phi_1\U\phi_2)\equiv_c\F\phi_2$: $\sigma$ is any infinite word.\\
$\Rightarrow$:
For any $\sigma\models\F(\phi_1\U\phi_2)$, there exists $j\geq 0$ such that $\suffix{\sigma}{j}\models\phi_2$
,which implies $\sigma\models\F\phi_2$.\\
$\Leftarrow$:
For any $\sigma\models\F\phi_2$, there exists $j\geq 0$ such that $\suffix{\sigma}{j}\models\phi_2$, which
implies $\suffix{\sigma}{j}\models\phi_1\U\phi_2$. Therefore, $\sigma\models\F(\phi_1\U\phi_2)$.
\\
Actually we have proved $\F(\phi_1\U\phi_2)\equiv\F\phi_2$ .
\item $\G(\phi_1\U\phi_2) \equiv_c \G(\phi_1\lor\phi_2)\land\F\phi_2$:\\
$\Rightarrow$: $\sigma$ is any infinite word. \\
For any $\sigma\models\G(\phi_1\U\phi_2)$, we have $\forall i\geq 0. \suffix{\sigma}{i}\models\phi_1\U\phi_2$
by definition. Thus, there must exist $j\geq0$ such that $\suffix{\sigma}{j}\models\phi_2$, which implies
$\sigma\models\F\phi_2$. Moreover, suppose $\sigma\nvDash\G(\phi_1\lor\phi_2)$, i.e, $\sigma\models\F(\neg\phi_1\land\neg\phi_2)$, then there exists $k\geq 0$ such that $\suffix{\sigma}{k}\models\neg\phi_1\land\neg\phi_2$, which implies $\suffix{\sigma}{k}\nvDash\phi_1\U\phi_2$. It is
contradictory to $\forall i\geq 0. \suffix{\sigma}{i}\models\phi_1\U\phi_2$, then this direction has been proved.
\\
$\Leftarrow$: $\sigma\in(2^\AP)^{*}$, \\
For any $\sigma^\omega\models\G(\phi_1\U\phi_2)\land\F\phi_2$, we have $\suffix{\sigma^\omega}{i}\models\phi_1\lor\phi_2$ for any $i\geq 0$ and there exists $j\geq 0$ such that
$\suffix{\sigma^\omega}{j}\models\phi_2$. For any $i\geq 0$, we divide it into two cases: 1) if $\suffix{\sigma^\omega}{i}\models\phi_2$, then $\suffix{\sigma^\omega}{i}\models\phi_1\U\phi_2$;
2) if $\suffix{\sigma^\omega}{i}\models\phi_1$, since $\sigma^{\omega}\models\G(\phi_1\lor\phi_2)$ and we want to prove $\suffix{\sigma^\omega}{i}\models\phi_1\U\phi_2$, the remaining question is whether there exists $j\geq i$ such
that $\sigma^\omega\models\phi_2$. Since $\exists j\geq 0. \suffix{\sigma^\omega}{j}\models \phi_2$ and $\sigma^\omega$ is cyclic, it will always be true. Therefore, $\suffix{\sigma^\omega}{i}\models
\phi_1\U\phi_2$.
\item $\F(\phi_1\land\F\phi_2)\equiv_c\F\phi_1\land\F\phi_2$\\
$\Rightarrow$: For any word $\sigma\models\F(\phi_1\land\F\phi_2)$ implies $\exists i\geq 0$ and $j\geq i$ such that $\suffix{\sigma}{i}\models\phi_1$ and $\suffix{\sigma}{j}\models\phi_2$ respectively. Immediately we have $\sigma\models\F\phi_1\land\F\phi_2$.\\
$\Leftarrow$: $\sigma \in(2^\AP)^*$\\ 
$\sigma^\omega\models\F\phi_1\land\F\phi_2$, then $\exists j\geq i \geq 0$ such that $\suffix{\sigma^\omega}{i}\models\phi_1$ and $\suffix{\sigma^\omega}{j}\models\phi_2$ since $\sigma^\omega$ is cyclic word.
Thus $\sigma^\omega\models\F(\phi_1\land\F\phi_2)$.
\item $\F(\phi_1\land\G\phi_2)\equiv_c\F\phi_1\land\G\phi_2$\\
$\Rightarrow$: $\sigma\in(2^\AP)^*$\\ 
$\sigma^\omega\models\F(\phi_1\land\G\phi_2)$, then $\exists i\geq 0$ such that
$\suffix{\sigma^\omega}{i}\models\phi_1\land\G\phi_2$. Clearly, $\sigma^\omega\models\F\phi_1$.
Since $\sigma^\omega$ is cyclic word and $\suffix{\sigma^\omega}{i}\models\G\phi_2$, which implies
$\sigma^\omega\models\G\phi_2$.\\
$\Leftarrow$: $\sigma$ is any infinite word.\\
$\sigma\models\F\phi_1\land\G\phi_2$, then $\exists i\geq 0$ such that $\suffix{\sigma}{i}\models\phi_1$.
Since $\sigma\models\G\phi_2$, we have $\suffix{\sigma}{i}\models\G\phi_2$. Therefore $\suffix{\sigma}{i}\models\phi_1\land\G\phi_2$, which implies $\sigma\models\F(\phi_1\land\G\phi_2)$.
\item $\phi_1\U(\phi_2 \land \F\phi_3)\equiv_c(\phi_1\U\phi_2) \land
  \F\phi_3$. \\
  Let $\phi=\phi_1\U(\phi_2 \land \F\phi_3)$, $\psi=(\phi_1\U\phi_2) \land
  \F\phi_3$ and $\sigma\in(2^\AP)^*$\\
$\Rightarrow$:
  $\sigma^\omega\models\phi$, then
  $\sigma^\omega\models\phi_1\U\phi_2$. Moreover, it must be the case
  that $\exists i\geq 0$ such that
  $\suffix{\sigma^\omega}{i}\models\phi_3$, which implies
  $\sigma^\omega\models\F\phi_3$. Therefore,
  $\sigma^\omega\models\psi$. \\
$\Leftarrow$: $\sigma^\omega\models (\phi_1\U\phi_2)\land\F\phi_3$, then
$\exists j \geq 0$ such that $\suffix{\sigma^\omega}{j}\models\phi_2$ and $\forall 0\leq i <j$,
$\suffix{\sigma^\omega}{i}\models\phi_1$. Since $\sigma^\omega\models\F\phi_3$ and $\sigma^\omega$ is cyclic word, we can always find some $k\geq j$ such that $\suffix{\sigma^\omega}{k}\models\phi_3$, which
implies $\suffix{\sigma^\omega}{j}\models\F\phi_3$. Thus $\sigma^\omega\models\phi_1\U(\phi_2 \land \F\phi_3)$.
\item $\phi_1\U(\phi_2 \lor \F\phi_3)\equiv_c(\phi_1\U\phi_2) \lor
  \F\phi_3$. \\
  Since $\phi_1\U(\phi_2 \lor \F\phi_3)\equiv(\phi_1\U\phi_2)\lor(\phi_1\U(\F\phi_3))$, we only need
  to prove $\phi_1\U(\F\phi_3)\equiv_c\F\phi_3$, which even holds in general.
\item $\phi_1\U(\phi_2 \land \G\phi_3)\equiv_c(\phi_1\U\phi_2) \land
  \G\phi_3$. \\
    Let $\phi=\phi_1\U(\phi_2 \land \G\phi_3)$, $\psi=(\phi_1\U\phi_2) \land
  \G\phi_3$ and $\sigma\in(2^\AP)^*$\\
$\Rightarrow$:
  $\sigma^\omega\models\phi$, then
  $\sigma^\omega\models\phi_1\U\phi_2$. Moreover, $\exists i\geq 0$ such that
  $\suffix{\sigma^\omega}{i}\models\G\phi_3$, which implies
  $\sigma^\omega\models\G\phi_3$. Therefore,
  $\sigma^\omega\models\psi$. \\
$\Leftarrow$: $\sigma^\omega\models (\phi_1\U\phi_2)\land\G\phi_3$, then
$\exists j \geq 0$ such that $\suffix{\sigma^\omega}{j}\models\phi_2$ and $\forall 0\leq i <j$,
$\suffix{\sigma^\omega}{i}\models\phi_1$. Since $\sigma^\omega\models\G\phi_3$ which
implies $\suffix{\sigma^\omega}{j}\models\G\phi_3$. Thus $\sigma^\omega\models\phi_1\U(\phi_2 \land \G\phi_3)$.
\item $\phi_1\U(\phi_2 \lor \G\phi_3)\equiv_c(\phi_1\U\phi_2) \lor
  \G\phi_3$. \\
   Since $\phi_1\U(\phi_2 \lor \G\phi_3)\equiv(\phi_1\U\phi_2)\lor(\phi_1\U(\G\phi_3))$, we only need
  to prove $\phi_1\U(\G\phi_3)\equiv_c\G\phi_3$. Clearly, $\G\phi_3$ implies $\phi_1\U\G\phi_3$.
  Since for any cyclic word $\sigma^\omega\models\phi_1\U\G\phi_3$, $\exists j\geq 0$ such that
  $\suffix{\sigma^\omega}{j}\models\G\phi_3$, which implies $\sigma^\omega\models\G\phi_3$. Thus the claim holds.
\item $(\phi_1 \lor
    \G\phi_2)\U\phi_3\equiv_c(\G\phi_2\land\F\phi_3)\lor(\phi_1\U\phi_3)$. \\
    Let
    $\phi=(\phi_1 \lor \G\phi_2)\U\phi_3$,
    $\psi=(\G\phi_2\land\F\phi_3)\lor(\phi_1\U\phi_3)$ and $\sigma\in(2^\AP)^*$.\\
$\Rightarrow$: $\sigma^\omega\models(\phi_1\lor\G\phi_2)\U\phi_3$, then
    $\exists j\geq 0$ such that $\suffix{\sigma^\omega}{j}\models\phi_3$ and $\forall 0\leq i<j$,
    $\suffix{\sigma}{i}\models\phi_1$ or $\suffix{\sigma^\omega}{i}\models\G\phi_2$. Case i), $\exists 0\leq i<j$
    such that $\suffix{\sigma^\omega}{i}\models\G\phi_2$, which implies $\sigma^\omega\models\G\phi_2$.
    Thus $\sigma^\omega\models\G\phi_2\land\F\phi_3$. Case ii) is trivial, as $\forall 0\geq i<j$,
    $\suffix{\sigma^\omega}{i}\models\phi_2$, which directly conclude $\sigma^\omega\models\phi_1\U\phi_3$.\\
$\Leftarrow$: Since $\sigma^\omega\models\psi$, we
    have either i) $\sigma^\omega\models(\G\phi_2\land\F\phi_3)$ or
    ii) $\sigma^\omega\models(\phi_1\U\phi_3)$. For i), note that
    $\sigma^\omega\models\G\phi_2$ implies all suffixes of
    $\sigma^\omega$ satisfy $\G\phi_2$. Together with the fact that
    $\sigma^\omega\models\F\phi_3$, we have
    $\sigma^\omega\models(\G\phi_2)\U\phi_3$, which implies
    $\sigma^\omega\models\phi$. Case ii) is trivial, as
    $\sigma^\omega\models(\phi_1\U\phi_3)$ implies
    $\sigma^\omega\models\phi$.
\item $(\phi_1 \lor \F\phi_2)\U\phi_3 \equiv_c (\F\phi_2\land\F\phi_3)
  \lor (\phi_1\U\phi_3)$\\
  Let $\phi=(\phi_1 \lor \F\phi_2)\U\phi_3$, $\psi=(\F\phi_2\land\F\phi_3)
  \lor (\phi_1\U\phi_3)$ and $\sigma\in(2^\AP)^*$.\\
 $\Rightarrow$: Since $\sigma^\omega\models\phi$, we have $\exists j\geq 0$ such
 that $\sigma^\omega\models\phi_3$ and $\forall 0 \leq i <j$, $\suffix{\sigma^\omega}{i}\models\phi_1\lor\F\phi_2$. Case i), $\exists 0\leq k<j$ such
 that $\suffix{\sigma^\omega}{k}\models\F\phi_2$, then $\suffix{\sigma^\omega}{k}\models\F\phi_2\land\F\phi_3$.
 Case ii), $\forall 0\leq i<j$, $\suffix{\sigma^\omega}{i}\models\phi_1$, which directly concludes
 $\sigma^\omega\models\phi_1\U\phi_3$. Thus $\sigma^\omega\models\psi$. \\
 $\Leftarrow$: $\sigma^\omega\models\psi$ includes two cases. Case i), $\sigma^\omega\models\F\phi_2\land\F\phi_3$. Since $\sigma^\omega$ is cyclic, $\exists j\geq 0$
 such that $\suffix{\sigma^\omega}{j}\models\phi_2$, then $\forall i\geq 0$, $\suffix{\sigma^\omega}{i}\models\F\phi_2$.
 Thus $\sigma^\omega\models(\F\phi_2)\U\phi_3$, which implies $\sigma^\omega\models(\phi_1\lor\F\phi_2)\U\phi_3$.
 Case ii), $\sigma^\omega\models\phi_1\U\phi_3$, which implies $\sigma^\omega\models(\phi_1\lor\F\phi_2)\U\phi_3$.
\item $(\phi_1 \land \F\phi_2)\U\phi_3  \equiv_c  (\F\phi_2 \land
  (\phi_1\U\phi_3)) \lor \phi_3$\\
Let $\phi=(\phi_1 \land \F\phi_2)\U\phi_3$, $\psi=(\F\phi_2 \land
  (\phi_1\U\phi_3)) \lor \phi_3$ and $\sigma\in(2^\AP)^*$.\\
$\Rightarrow$: Trivial since $\phi$ implies $\phi_3$.\\
$\Leftarrow$: $\sigma^\omega\models\psi$ includes two cases.
Case i), $\sigma^\omega\models\phi_3$, it is obvious that $\psi$ implies $\phi$.
Case ii), $\sigma^\omega\models\F\phi_2\land(\phi_1\U\phi_3)$. Since $\exists j\geq 0$
such that $\sigma^\omega\models\phi_2$, then $\forall i\geq 0$, $\suffix{\sigma^\omega}{i}\models\F\phi_2$.
Thus $\sigma^\omega\models(\phi_1\land\F\phi_2)\U\phi_3$.
\item $(\phi_1 \land \G\phi_2)\U\phi_3  \equiv_c  (\G\phi_2 \land
  (\phi_1\U\phi_3)) \lor \phi_3$\\
 Let $\phi=(\phi_1 \land \F\phi_2)\U\phi_3$, $\psi=(\F\phi_2 \land
  (\phi_1\U\phi_3)) \lor \phi_3$ and $\sigma\in(2^\AP)^*$.\\
$\Rightarrow$: Trivial since $\phi$ implies $\phi_3$.\\
$\Leftarrow$: $\sigma^\omega\models\psi$ consists of two cases.
Case i), $\sigma^\omega\models\phi_3$, it is obvious that $\psi$ implies $\phi$.
Case ii), $\sigma^\omega\models\G\phi_2\land(\phi_1\U\phi_3)$. Since $\sigma^\omega\models\G\phi_2$
implies $\forall i\geq 0$, $\suffix{\sigma^\omega}{i}\models\G\phi_2$.
Thus $\sigma^\omega\models(\phi_1\land\G\phi_2)\U\phi_3$.
\end{itemize}
This completes the proof. Note that using above equations, we can get all $\F$ and $\G$ modalities out
of the scope of the modalities $\U$.
\end{proof}
\subsection{Automaton for $\F\G(a \lor \X(b \U c))$}
We use spot to generate the automaton $\mathcal{A}$ for fair formula $\F\G(a \lor \X(b \U c))$,
which is depicted in Figure~\ref{fig:examplefairLTL} where accepting states are marked by double circles.
Moreover, we add a new atomic proposition
$accepting$ to label accepting states $q_1$ and $q_2$, thus the B\"uchi accepting condition can be represented
as a fair formula $\G\F accepting$.
\begin{figure}[t]
	\centering
	\begin{tikzpicture}[>=stealth',shorten >=1pt,auto]	
    \node (K) at (0,0) {};

    \node[state] (q0) at ($(K) + (0,0)$) {$q_0$};
	\node[state, accepting] (q1) at ($(K) + (2,0.5)$) {$q_1$};
	\node[state, accepting] (q2) at ($(K) + (4, -0.4)$) {$q_2$};
    \node[state] (q3) at ($(K) + (7,1)$) {$q_3$};

	\draw[->] ($(q0.north) + (0,0.3)$) to (q0.north);
	\draw[->] (q0) edge [loop left] node {$\top$} (q0);
	\draw[->] (q0) edge  node {$a$} (q1);
    \draw[->] (q0) edge [bend right]  node[below] {$\neg a$} (q2);
	\draw[->] (q1) edge [loop above ] node {$a$} (q1);
    \draw[->] (q1) edge [bend left ] node {$\neg a$} (q2);
	\draw[->] (q2) edge [bend left] node {$a\land c$} (q1);
    \draw[->] (q2) edge [loop below] node {$\neg a\land c$} (q2);
    \draw[->] (q2) edge  node {$b\land\neg c$} (q3);
    \draw[->] (q3) edge [loop above]  node {$b\land\neg c$} (q3);
    \draw[->] (q3) edge [bend right]  node {$a\land c$} (q1);
    \draw[->] (q3) edge [bend left]  node {$\neg a\land\neg c$} (q2);
	\end{tikzpicture}
	\caption{B\"uchi automaton for $\F\G(a \lor \X(b \U c))$}
	\label{fig:examplefairLTL}
\end{figure}

According to the algorithm in the paper, we then compose $\Kripke$ from
Example~\ref{ex:kripke} and $\mathcal{A}$, which gives us the product Kripke
structure $\Kripke'$ in Figure~\ref{fig:examplePro}. After above preparations,
we are trying to find an SCC accepted by $\G\F accepting \land \G\F \neg b$.
There are two alternative SCCs meet the requirement, namely $\{(q_1, s_0)\}$
and $\{(q_1, s_0), (q_1, s_1), (q_2,s_2)\}$. We take SCC $\{(q_1, s_0), (q_1, s_1), (q_2,s_2)\}$ and
construct the counterexample according to the steps we give in the paper.
We can get a counterexample like $(q_0, s_0)((q_1, s_1)(q_2,s_2)(q_1, s_0))^\omega$, which
gives us a corresponding path $s_0(s_1 s_2 s_0)^\omega$ in $\Kripke$.
\begin{figure}[t]
	\centering
    \scriptsize
	\begin{tikzpicture}[>=stealth',shorten >=1pt,auto]	
    \node (K) at (0,0) {};

    \node[state] (q0s0) at ($(K) + (0,3)$) {$(q_0,s_0)$};
    \node[state] (q0s1) at ($(K) + (2,5)$) {$(q_0,s_1)$};
    \node[state] (q0s2) at ($(K) + (5,5.5)$) {$(q_0,s_2)$};
	\node[state, accepting] (q1s1) at ($(K) + (2, 3)$) {$(q_1,s_1)$};
    \node[state, accepting] (q1s0) at ($(K) + (2, 1)$) {$(q_1,s_0)$};
	\node[state, accepting] (q2s2) at ($(K) + (4, 3)$) {$(q_2,s_2)$};

	\draw[->] ($(q0s0.west) + (-0.3,0)$) to (q0s0.west);
    \draw[->] (q0s0) edge [loop below] node {} (q0s0);
    \draw[->] (q0s0) edge node {} (q0s1);
    \draw[->] (q0s0) edge node {} (q1s1);
    \draw[->] (q0s0) edge node {} (q1s0);

    \draw[->] (q0s1) edge node {} (q0s2);
    \draw[->] (q0s1) edge node {} (q2s2);
    \draw[->] (q0s2) edge[bend right=70] node {} (q0s0);
    \draw[->] (q0s2) edge[bend left=50] node {} (q1s0);
    \draw[->] (q1s1) edge node {} (q2s2);
    \draw[->] (q1s0) edge [loop below] node {} (q1s0);
    \draw[->] (q1s0) edge node {} (q1s1);
    \draw[->] (q2s2) edge node {} (q1s0);

	\end{tikzpicture}
	\caption{product Kripke structure $\Kripke'$ for $\Kripke$ and $\mathcal{A}$}
	\label{fig:examplePro}
\end{figure}
\end{document}